\documentclass[a4paper,onecolumn,11pt, unpublished]{quantumarticle}

\pdfoutput=1
\usepackage[utf8]{inputenc}
\usepackage[english]{babel}
\usepackage[T1]{fontenc}

\usepackage{amsmath}
\usepackage[colorlinks=true, linkcolor=blue, citecolor=blue, urlcolor=blue, anchorcolor=black]{hyperref}

\usepackage{tikz}
\usepackage{lipsum}

\usepackage{amsthm}

\newtheorem{theorem}{Theorem}

\usepackage{comment}
\usepackage{amsfonts}
\usepackage{diagbox}
\usepackage{color}
\usepackage{dsfont}
\usepackage{graphicx}
\usepackage{subcaption}
\usepackage{amssymb}
\usepackage{amsmath,amssymb,bm,amsfonts,mathrsfs,bbm}
\usepackage{physics}

\newtheorem{proposition}{Proposition}
\newtheorem{corollary}{Corollary}
\newtheorem{definition}{Definition}
\newtheorem{lemma}{Lemma}

\usepackage[numbers]{natbib}

\usetikzlibrary{shapes}
\usepackage{tikz-3dplot}
\usetikzlibrary{intersections, backgrounds}
\usepackage{soul,xcolor}

\makeatletter
\makeatother

\begin{document}

\title{Sufficient criteria for absolute separability in arbitrary dimensions via linear map inverses}

\author{Jofre Abellanet-Vidal$^\dag$*}
\email{jofre.abellanet@uab.cat}
\affiliation{F\'isica Te\`orica: Informaci\'o i Fen\`omens Qu\`antics, Departament de F\'isica, Universitat Aut\`onoma de Barcelona, E-08193 Bellaterra, Spain}

\author{Guillem M\"uller-Rigat*}
\email{guillem.muller@icfo.eu}
\affiliation{ICFO-Institut de Ci\`encies Fot\`oniques, The Barcelona Institute of Science and Technology, Castelldefels (Barcelona) 08860, Spain.}

\author{Grzegorz Rajchel-Mieldzioć}
\affiliation{BEIT sp. z o.o., ul.\ Mogilska 43, 31-545 Krak{\'o}w, Poland}
\affiliation{ICFO-Institut de Ci\`encies Fot\`oniques, The Barcelona Institute of Science and Technology, Castelldefels (Barcelona) 08860, Spain.}

\author{Anna Sanpera}
\affiliation{F\'isica Te\`orica: Informaci\'o i Fen\`omens Qu\`antics, Departament de F\'isica, Universitat Aut\`onoma de Barcelona, E-08193 Bellaterra, Spain}
\affiliation{ICREA -- Instituci\'o Catalana de Recerca i Estudis Avan\c{c}ats, Lluis Companys 23, 08010 Barcelona, Spain}

\maketitle
\renewcommand\thefootnote{}
\footnotetext{*These authors contributed equally to this work.}
\footnotetext{$\dag$ Corresponding author.}
\renewcommand\thefootnote{\arabic{footnote}}

\begin{abstract}
Quantum states that remain separable (i.e., not entangled) under any global unitary transformation are known as \textit{absolutely separable} and form a convex set. Despite extensive efforts, the complete characterization of this set remains largely unknown. In this work, we employ linear maps and their inverses to derive new sufficient analytical conditions for absolute separability in arbitrary dimensions, providing extremal points of this set and improving its characterization. Additionally, we employ convex geometry optimization to refine the characterization of the set when multiple non-comparable criteria for absolute separability are available. We also address the closely related problem of characterizing the \textit{absolute PPT} (positive partial transposition) set, which consists of quantum states that remain positive under partial transposition across all unitary transformations. Finally, we extend our results to multipartite states. \textit{We are proud to dedicate our work to Professor Ryszard Horodecki, whose pioneering contributions to the field of quantum entanglement continue to inspire us all. With deep gratitude and respect.}
\end{abstract}

\section{Introduction}
Despite decades of intense research, necessary and sufficient conditions to ensure separability exist only in the simplest cases~\cite{peres_separability_1996, horodecki_separability_1996} or for highly restricted families of states.
In fact, entanglement verification and quantification is known to be generically an NP-hard problem~\cite{gharibian_strong_2010, gurvits_classical_2003}. 
One such necessary and sufficient condition involves positive but not completely positive maps; however, to date, neither a finite set of such maps~\cite{skowronek_there_2016} nor semidefinite programming methods~\cite{doherty_complete_2004, guhne_entanglement_2009} can definitively determine whether a given state is entangled~\cite{horodecki_five_2022, knill_iqoqi_2013}. Given these limitations, rather than seeking \textit{iff} conditions, one may instead focus on deriving sufficient conditions for separability.

 Here, we draw our attention to the use of the inverses of positive but not completely positive maps as a sufficient separability criterion~\cite{lewenstein_sufficient_2016, filippov_absolutely_2017, lewenstein_linear_2022}. 
 In particular, we examine \textit{absolute separability} (AS), which refers to those states that remain separable under all unitary transformations~\cite{kus_geometry_2001}. Characterizing the set of AS is crucial for various reasons. Beyond its significance in fully describing the set of quantum states, AS states play a key role in the resource theory of entanglement, where they represent free states, while entangled states serve as a resource \cite{chitambar_quantum_2019, patra_resource_2023}. Since unitary transformations are the primary way of generating entanglement using e.g., quantum circuits or quenches, a state certified as AS cannot become entangled through unitary transformations~\cite{halder_characterizing_2021}. Moreover, the characterization of such states must depend only on 
their spectrum (eigenvalues), and their detection does not require full quantum state tomography \cite{ekert_direct_2002,tanaka_determining_2014}. For a quantum state of dimension $D=N\cdot M$ (with $N,M$ representing a bipartition), full tomography requires reconstructing $D^{2}-1$ real parameters, while only $D-1$ eigenvalues are needed to characterize the spectrum. Finally, since the AS set is convex, its characterization relies on identifying its extremal points.

 Several sufficient criteria for absolute separability (AS) have been proposed over the years. These include bounding matrix norms~\cite{gurvits_largest_2002, hildebrand_entangled_2007}, using linear matrix inequalities~\cite{hildebrand_ppt_2007, champagne_spectral_2022}, establishing the equivalence of absolute separability and absolute PPT for certain subsets~\cite{verstraete_maximally_2001, johnston_separability_2013}, employing separability witnesses~\cite{serrano-ensastiga_absolute-separability_2024, patra_efficient_2021}, computing moments \cite{mallick_detection_2025}, considering positive reduction from the spectrum \cite{jivulescu_positive_2015} and identifying extremal points~\cite{song_extreme_2024}, among others. Despite these advancements, the complete characterization of the AS set remains an open problem~\cite{arunachalam_is_2015}.

Here, we present a more refined characterization of the AS set for arbitrary dimensions. Our findings yield compact analytical conditions for absolute separability that depend on only a few, or even a single eigenvalue of the state. Using similar techniques, we also derive bounds on the AS set in multipartite scenarios, applicable to both qubits and qudits. Due to the convexity of the AS set, criteria for AS can be expressed as convex conditions on the state $\rho$. 
Here, we show how to merge previous criteria of AS with our own ones into a unifying framework based on convex programming, a class of efficiently solvable optimization algorithms. The separability tests we propose not only reconcile various separability conditions but also enable systematic improvements over them. In doing so, we offer a new perspective on the long-standing open problem of \textit{separability from the spectrum} in quantum information theory~\cite{knill_iqoqi_2013}.

The paper is organized as follows: In Section~\ref{sec:AS}, we introduce the formal definitions of absolute separability (AS) and its twin problem, absolute positive partial transposition (AP). We review previous criteria and derive new theorems for AS by inverting non-completely positive maps, which significantly improve upon previous results. In Section~\ref{sec:CH}, we present new bounds on absolute separability using convex geometric tools, thereby unifying previously incomparable criteria. Notably, our methods are applicable to arbitrary dimensions. In Section~\ref{sec:multi}, we explore absolute separability in multipartite systems, where we derive analytical conditions based on a single eigenvalue. Finally, we focus on the symmetric subspace, where we establish membership conditions for the AS and AP sets. We conclude in Section~\ref{sec:conclus} and propose several open conjectures stemming from our results.

\section{Absolute separability}
\label{sec:AS}
Let us recall that a quantum state is separable if it can be represented as a convex combination of product states. For bipartite states, $\rho\in \mathcal B(\mathbb{C}^{N}\otimes\mathbb{C}^{M})$, where $\mathcal B(\mathbb{C}^{N} )$ denotes the set of bounded operators acting on a Hilbert space
$\mathcal H= \mathbb{C}^{N}$,  that implies
$$\rho=\sum_i p_i \ket{e_i,f_i}\bra{e_i,f_i},$$
with $\ket{e_i}\in \mathbb{C}^{N}$, $\ket{f_i}\in\mathbb{C}^{M}$, $p_i\geq 0$ and $\sum_i p_i=1$. 
Notice that all entangled pure states
can be obtained by applying a certain global unitary transformation $U$ to a given product state, i.e., $\ket{\Psi}_{NM}=U(\ket{e}_N\ket{f}_M)$, however, this no longer holds for mixed states. 
Formally, we define absolute separability as:

\begin{definition}
    A state $\rho \in \mathcal B(\mathbb{C}^{N}\otimes\mathbb{C}^{M})$ is absolutely separable (AS) if and only if, for any unitary matrix $U$ acting on $ \mathbb{C}^{N}\otimes\mathbb{C}^{M}$, $\rho'=U\rho U^{\dag}$ is separable. 
 \end{definition}

Since unitary transformations preserve the spectrum of a given density matrix $\rho$, it is possible to infer \textit{absolute} separability criteria based solely on the eigenvalues of the state \cite{verstraete_maximally_2001}.  For that reason, the set AS is also called separable from spectrum. Clearly, the maximally mixed state (MMS),  $\rho\propto\mathds{1}$, is absolutely separable. Moreover, states sufficiently close to the MMS are also absolutely separable, as discussed in the following. \\

The most celebrated AS criterion is based on the matrix norm of the operator $\rho$, and is known as the \textit{Gurvits-Barnum ball} (GBB)~\cite{gurvits_largest_2002}. It can be expressed as a function of the purity of the state, $\mathrm{Tr}(\rho^2)$, as follows:

\begin{theorem}
\label{thm:Gurvits}\cite{gurvits_largest_2002}
Let  $\rho \in \mathcal B(\mathbb{C}^{N}\otimes\mathbb{C}^{M})$  be a normalized state in a Hilbert space of global dimension $D=N\cdot M$. If
\begin{equation}
\label{eq:GurvitsBi}
  \mathrm{Tr}(\rho^2)\leq \frac{1}{D-1},
\end{equation}
then $\rho$ is absolutely separable.
\end{theorem}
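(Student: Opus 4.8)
The plan is to recast the purity hypothesis as membership in a Hilbert--Schmidt ball around the maximally mixed state, to note that such a ball is invariant under global unitaries, and then to reduce the statement to the (non-absolute) fact that the ball lies inside the separable set.

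First I would reformulate \eqref{eq:GurvitsBi}. Using $\mathrm{Tr}(\rho)=1$ and $\mathrm{Tr}(\mathds{1})=D$,
\begin{equation}
\Big\|\rho-\tfrac{\mathds{1}}{D}\Big\|_2^{2}=\mathrm{Tr}(\rho^{2})-\tfrac{2}{D}+\tfrac{1}{D}=\mathrm{Tr}(\rho^{2})-\tfrac{1}{D},
\end{equation}
so \eqref{eq:GurvitsBi} is \emph{equivalent} to $\|\rho-\mathds{1}/D\|_2\le R_D:=1/\sqrt{D(D-1)}$, i.e.\ $\rho$ lies in the Hilbert--Schmidt ball $\mathcal{B}_D$ of radius $R_D$ centred at the MMS. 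Since both $\|\rho-\mathds{1}/D\|_2$ and $\mathrm{Tr}(\rho^2)$ are functions of the spectrum alone, this is automatically a spectral criterion.

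Second, I would observe that $\mathcal{B}_D$ is invariant under global unitaries: the Hilbert--Schmidt norm is unitarily invariant and $U(\mathds{1}/D)U^{\dagger}=\mathds{1}/D$, hence $\|U\rho U^{\dagger}-\mathds{1}/D\|_2=\|\rho-\mathds{1}/D\|_2$ for all unitary $U$. Therefore, if $\rho\in\mathcal{B}_D$ then the whole unitary orbit of $\rho$ stays inside $\mathcal{B}_D$. Consequently it suffices to prove the \emph{non-absolute} statement $\mathcal{B}_D\subseteq\mathrm{SEP}$: once this is known, every $U\rho U^{\dagger}$ is separable, so $\rho$ is absolutely separable, with no further work.

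Third --- and this is the technical heart, hence the main obstacle --- I would prove $\mathcal{B}_D\subseteq\mathrm{SEP}$. Writing $\rho=\mathds{1}/D+\sigma$ with $\mathrm{Tr}\,\sigma=0$ and $\|\sigma\|_2\le R_D$, mere positivity of $\rho$ is elementary (for traceless Hermitian $T=D\sigma$, Cauchy--Schwarz on the eigenvalues gives $\|T\|_\infty\le\sqrt{(D-1)/D}\,\|T\|_2\le 1$), but positivity does not imply separability, so the full Hilbert--Schmidt smallness of $\sigma$ must be exploited. I would argue by a separating hyperplane: were some $\rho\in\mathcal{B}_D$ entangled, there would exist an entanglement witness, which I decompose as $W=\frac{\mathrm{Tr}(W)}{D}\mathds{1}+W_0$ with $\mathrm{Tr}\,W_0=0$ and $\mathrm{Tr}\,W\ge 0$ (because $\mathds{1}/D\in\mathrm{SEP}$), satisfying $\mathrm{Tr}(W\rho)<0$. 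Since $\mathrm{Tr}(W\rho)=\frac{\mathrm{Tr}(W)}{D}+\mathrm{Tr}\!\big(W_0(\rho-\mathds{1}/D)\big)$ and $|\mathrm{Tr}(W_0(\rho-\mathds{1}/D))|\le\|W_0\|_2 R_D$, this would force $\|W_0\|_2^{2}>\frac{D-1}{D}\,\mathrm{Tr}(W)^{2}$, i.e.\ $\mathrm{Tr}(W^{2})>\mathrm{Tr}(W)^{2}$. Thus the theorem reduces to the dual inequality
\begin{equation}
\mathrm{Tr}(W^{2})\le \mathrm{Tr}(W)^{2}\qquad\text{for every block-positive }W ,
\end{equation}
equivalently $\|W\|_2\le\mathrm{Tr}(W)$, which I would establish by averaging the inequality $\langle e\otimes f|W|e\otimes f\rangle\ge 0$ against a $2$-design measure on each tensor factor so as to control $\mathrm{Tr}(W^2)$ in terms of $\mathrm{Tr}(W)$ --- in essence the Gurvits--Barnum estimate \cite{gurvits_largest_2002}. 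The bound is saturated at $D=4$ (two qubits) by a maximally-entangled witness $W=\tfrac12\mathds{1}-|\psi^{+}\rangle\langle\psi^{+}|$, which is what pins down the constant $1/(D-1)$. Everything outside this last estimate is routine bookkeeping.
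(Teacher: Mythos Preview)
The paper does not prove Theorem~\ref{thm:Gurvits}; it is stated with a citation to \cite{gurvits_largest_2002} and then used as a black box. Your sketch therefore goes further than the paper does. The reformulation of \eqref{eq:GurvitsBi} as membership in a Hilbert--Schmidt ball, the observation that unitary invariance of that ball reduces the absolute statement to ordinary separability of $\mathcal{B}_D$, and the duality argument reducing $\mathcal{B}_D\subseteq\mathrm{SEP}$ to the witness bound $\mathrm{Tr}(W^2)\le\mathrm{Tr}(W)^2$ for block-positive $W$ are all correct. You then identify that last inequality as the Gurvits--Barnum estimate and defer it to \cite{gurvits_largest_2002}, which is exactly the result the paper is citing; so in substance you and the paper agree on where the nontrivial content lies, and your saturation check at $D=4$ with $W=\tfrac{1}{2}\mathds{1}-\ketbra{\psi^{+}}$ is a correct sanity test of the constant. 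The only soft spot is the phrase ``averaging \ldots\ against a $2$-design measure on each tensor factor'': this is a plausible route to the witness inequality, but it is not literally how \cite{gurvits_largest_2002} proceeds, so if you want a self-contained argument along your dual formulation you would still have to supply that computation explicitly rather than import it.
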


From the spectral decomposition of the state, $\rho = \sum_{i=0}^{D-1}\lambda_i\ketbra{\psi_i}$, condition Eq.~\eqref{eq:GurvitsBi} is geometrically interpreted as a ball of radius $(\sqrt{D-1})^{-1}$ in the space of eigenvalues $\boldsymbol{\lambda} = \{\lambda_i\}$. 
For normalized states, i.e.,
$\mathrm{Tr}(\rho) = \sum_{i=0}^{D-1}\lambda_i = 1$, this ball is centered on the maximally mixed state.\\

\noindent \textbf{Absolute separability from linear maps.--} Here, we take a different approach based on linear maps to tackle the characterization of AS.
Recall that positive but not completely positive linear maps, $\Lambda:\mathcal B(\mathcal H)\rightarrow \mathcal B(\mathcal H')$,  provide \textit{sufficient} conditions for entanglement (or, equivalently, \textit{necessary} conditions for separability) when applied to just one of the subsystems of a bipartite state
\cite{horodecki_separability_1996, terhal_family_2001, lewenstein_characterization_2001}. 
Interestingly, as reported in~\cite{lewenstein_sufficient_2016, lewenstein_linear_2022}, such maps are also instrumental to derive different \textit{sufficient} conditions for separability. Before proceeding further, we present the central result on which our work relies: 

\begin{theorem}
\label{thm:Primer}\cite{lewenstein_sufficient_2016}
    Let S, S' be convex and compact subsets of the space of all bounded operators $\mathcal{B}(\mathbb{C}^{N}\otimes\mathbb{C}^{M})$, and let $\Lambda_{\mathbf{p}}:S\rightarrow S'$ be a family of maps, invertible for almost all $\mathbf{p}$. By $\mathcal{P}_{SS'}$ we denote the subset of the parameters set $\mathbf{p}$, for which the maps have the property that, for every $\rho\in S$, $\Lambda_{\mathbf{p}}(\rho)\in S'$ provided $\mathbf{p}\in\mathcal{P}_{SS'}$. Then, if $\Lambda_{\mathbf{p}}^{-1}(\sigma)\in S$, it follows that $\sigma \in S'$. 
\end{theorem}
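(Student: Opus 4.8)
The plan is essentially to unwind the definitions: the statement amounts to the observation that a linear map composed with its inverse is the identity, combined with the defining property of the parameter region $\mathcal{P}_{SS'}$. First I would restrict attention to a parameter value $\mathbf{p}\in\mathcal{P}_{SS'}$ lying in the (full-measure) set on which $\Lambda_{\mathbf{p}}$ is invertible — this is exactly the regime in which the symbol $\Lambda_{\mathbf{p}}^{-1}$ appearing in the hypothesis is meaningful — and I would read ``$\Lambda_{\mathbf{p}}^{-1}(\sigma)\in S$'' as the statement that the operator $\rho:=\Lambda_{\mathbf{p}}^{-1}(\sigma)$, which is well defined because $\sigma$ lies in the range of the bijective linear map $\Lambda_{\mathbf{p}}$, is an element of $S$.

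Next I would invoke the defining property of $\mathcal{P}_{SS'}$: by construction, $\mathbf{p}\in\mathcal{P}_{SS'}$ means precisely that $\Lambda_{\mathbf{p}}(S)\subseteq S'$, i.e.\ $\Lambda_{\mathbf{p}}$ sends every element of $S$ into $S'$. Applying this to $\rho\in S$ yields $\Lambda_{\mathbf{p}}(\rho)\in S'$. Finally, using $\Lambda_{\mathbf{p}}\circ\Lambda_{\mathbf{p}}^{-1}=\mathrm{id}$ on the range of $\Lambda_{\mathbf{p}}$, I get $\Lambda_{\mathbf{p}}(\rho)=\Lambda_{\mathbf{p}}\bigl(\Lambda_{\mathbf{p}}^{-1}(\sigma)\bigr)=\sigma$, and hence $\sigma\in S'$, which is the claim.

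I do not anticipate a genuine obstacle: the content of the result lies in how one later chooses $S$, $S'$ and the family $\Lambda_{\mathbf{p}}$ (e.g.\ $S$ the density matrices, $S'$ the (absolutely) separable states), not in the proof. The convexity and compactness of $S$ and $S'$ are not needed for the implication itself; they matter only because the set of states certified by this argument, namely $\{\sigma : \Lambda_{\mathbf{p}}^{-1}(\sigma)\in S\}=\Lambda_{\mathbf{p}}(S)$, is then automatically convex and compact (the image of a convex compact set under a linear, hence continuous, map), which is what makes the downstream convex-programming reformulation possible. The only points requiring care are to carry along the almost-everywhere invertibility qualifier so that $\Lambda_{\mathbf{p}}^{-1}$ is legitimate, and to be explicit that membership $\mathbf{p}\in\mathcal{P}_{SS'}$ is exactly the containment $\Lambda_{\mathbf{p}}(S)\subseteq S'$; once these are fixed, the three-step chain above is the entire argument.
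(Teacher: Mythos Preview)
Your argument is correct and is the natural one: set $\rho=\Lambda_{\mathbf p}^{-1}(\sigma)\in S$, use $\mathbf p\in\mathcal P_{SS'}$ to get $\Lambda_{\mathbf p}(\rho)\in S'$, and then $\Lambda_{\mathbf p}(\rho)=\sigma$. The paper does not actually supply its own proof of this theorem; it quotes it from~\cite{lewenstein_sufficient_2016} and only paraphrases its content afterward, so there is nothing to compare against beyond noting that your unwinding of the definitions is exactly what is intended. Your side remarks about convexity/compactness being inessential to the implication itself, and about the almost-everywhere invertibility caveat, are also on point.
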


In  particular , if the set  S  corresponds to all quantum states acting on $\mathbb{C}^{N}\otimes\mathbb{C}^{M}$, and the set $S'$ is the subset of separable states, the above theorem provides a sufficient criterion for certifying that a given state $\sigma\in \mathcal{B}(\mathbb{C}^{N}\otimes\mathbb{C}^{M}$) is separable. As pointed out in \cite{lewenstein_sufficient_2016}, the choice of $\Lambda_\mathbf{p}$ is determined first by the capability of proving the assumption that $\Lambda_{\bf p}(S)\subset S'$, and second by checking that $\Lambda_{\bf p}^{-1}(S')\subset S$, that is, by the difficulty of inverting analytically the chosen map. \\

The previous  theorem allows  us to derive sufficient criteria for AS. To this end, we define $S$  as the set of all quantum states, i.e.,  
$\rho\geq 0$ (or, up to normalization, positive semi-definite (PSD) matrices), and $S'$ as the subset of states that are separable from their spectrum. Within this framework, it is natural to consider unitarily covariant linear maps 
$\Lambda$, i.e., those fulfilling  that for all $\rho$ and $U$ acting on $\mathbb{C}^{N}\otimes\mathbb{C}^{M}$, $\Lambda(U \rho U^{\dag})=V(U)\Lambda(\rho)V(U)^{\dag}$, where $V(U)$ is also unitary \cite{bardet_characterization_2020}.  
As shown in \cite{bhat_linear_2011}, the only linear maps preserving unitary equivariance on the map and its inverse are the so-called reduction-like maps \cite{cerf_reduction_1999}, which are parameterized as:
\begin{equation}
    \label{eq:ReductionMap}
    \Lambda_{\alpha}(\rho)=\Tr(\rho)\cdot \mathds{1}+\alpha\cdot\rho,
\end{equation}
with the single parameter $\alpha\in \mathbb{R}$. This family of linear maps 
is invertible for any nonzero $\alpha$. Its inverse, which is linear, reads:  
\begin{equation}
    \label{Inversereduction}
    \Lambda_{\alpha}^{-1}(\sigma)=\frac{1}{\alpha}\left(\sigma-\frac{\Tr(\sigma)\cdot\mathds{1}}{D+\alpha}\right),
\end{equation}
where $D$ is the dimension of the Hilbert space in which $\sigma$ acts.

As demonstrated in \cite{lewenstein_sufficient_2016}, 
the map,  $\Lambda_{\alpha}(\rho)$, 
transforms any state $\rho$ into a separable state for $\alpha\in[-1,2]$. Consequently, Theorem~\ref{thm:Primer} establishes that the positivity of the inverted map, $\Lambda_{\alpha}^{-1}(\sigma)$
serves as a \textit{sufficient} condition for separability. Since the positivity of $\Lambda_{\alpha}^{-1}(\sigma)$ depends only on the spectrum of the state $\sigma$, and the map is unitarily invariant, this condition directly certifies AS. Furthermore, because the conditions for positivity are bounded by  $\alpha_-=-1$, and $\alpha_+=2$, these values provide sufficient criteria for AS based on a single eigenvalue, either the minimum or the maximum, as stated below.
\begin{theorem}
\label{Prop:Lewenstein}
Let $\rho \in \mathcal B(\mathbb{C}^{N}\otimes\mathbb{C}^{M})$  be a normalized state in a Hilbert space of global dimension $D=N\cdot M$,
with minimum and maximum eigenvalues $\lambda_{\min}(\rho)$ and $\lambda_{\max}(\rho)$ respectively. If
\begin{equation}
    \lambda_{\min}(\rho) \geq \frac{1}{D +2}
    \quad \text{or} \quad
    \lambda_{\max}(\rho) \leq \frac{1}{D-1},
\label{eq:redcond}
\end{equation}
 then $\rho$ is absolutely separable. 
\end{theorem}
\begin{proof}
(See also \cite{lewenstein_sufficient_2016, vidal_robustness_1999}). 
For $2\geq \alpha > 0$, positivity of $\Lambda_{\alpha}^{-1}(\rho)$ demands that $\rho - \mathds{1}/(N\cdot M+\alpha)\geq 0$, that is, $\lambda_{\min}(\rho) - 1/(M\cdot N +\alpha)\geq 0$. 
On the other hand, for $0>\alpha\geq -1$, the condition reads $\mathds{1}/(M\cdot N+\alpha) - \rho\geq 0 $, or equivalently, $1/(M\cdot N+\alpha)- \lambda_{\max}(\rho)\geq 0$.  The extreme values $\alpha = 2$ and $\alpha = -1$ yield the announced result.
\end{proof}
Finally, we note that both conditions contained in Theorem~\ref{Prop:Lewenstein} and the conditions on Gurvits-Barnum ball, Theorem \ref{thm:Gurvits}, depend solely on the dimension of the global bipartite system $D=N\cdot M$, and are independent of the specific local dimensions $N$ and $M$. For a multipartite state of global dimension $D$, these conditions provide sufficient criteria for absolute separability with respect to \textit{any} bipartition of the total system (see Section~\ref{sec:multi}). Moreover, as we will show later, the conditions derived from Theorem~\ref{Prop:Lewenstein} define extremal points in the convex set of AS states, enabling a geometric approach to its characterization. The classification of the extreme points of AS states is only known for the specific case of qubit-qudit systems \cite{johnston_separability_2013}.

Since all separable states are necessarily positive under partial transposition (PPT), it is natural to ask whether there exists a set of states that remain PPT under any unitary transformation, and whether this set coincides with the AS set \cite{hildebrand_ppt_2007, arunachalam_is_2015}. By analogy to the AS set, we define absolute PPT (AP) as:

\begin{definition}
A normalized state $\rho \in \mathcal B(\mathbb{C}^{N}\otimes\mathbb{C}^{M})$ is absolutely PPT 
(AP) if and only if, for any unitary matrix $U$ acting on $\mathbb{C}^{N}\otimes\mathbb{C}^{M}$, $\rho'=U\rho U^{\dag}$ is PPT.
\end{definition}

Unlike the AS set, the AP set has already been fully characterized for arbitrary dimensions through a series of linear matrix inequalities (LMIs), originally derived in~\cite{hildebrand_ppt_2007}. Despite the number of LMIs required to certify AP grows exponentially with $\min\{N,M\}$, making its characterization infeasible in high dimensions, a necessary AP condition can be obtained from the positivity of the first $2\times 2$ block of the first LMI (see Theorem~1 of~\cite{hildebrand_ppt_2007}). Considering the eigenvalues of a state $\rho\in\mathcal{B}(\mathbb{C}^{N}\otimes\mathbb{C}^{M})$ in non-decreasing order, i.e., $\lambda_{0}\leq\lambda_{1}\leq\cdots\leq\lambda_{N\cdot M-1}$, the necessary condition for AP reads:
\begin{equation}
\label{eq:2x2LMI}
\begin{pmatrix}
    2\cdot\lambda_{0} & \lambda_{1}-\lambda_{NM-1}\\
    \lambda_{1}-\lambda_{NM-1} & 2\cdot\lambda_{2}
\end{pmatrix}    \geq 0,
\end{equation}
involving only the three lowest eigenvalues, 
$\{ \lambda_{0}\leq\lambda_{1}\leq \lambda_{2} \}$, 
plus the largest one 
$\lambda_{N\cdot M-1}$. \\

Interestingly, the above condition is satisfied only by states whose minimum and maximum eigenvalues fulfill one of the conditions of Theorem~\ref{Prop:Lewenstein}. That is, those detected with the inverse of the reduction map for parameters $\alpha\in[-1,2]$. Consequently, the region of AP identified by the bounds of $\alpha$ derived from Eq.\eqref{eq:2x2LMI} coincides with the region of AS detected  by 
Eq.\eqref{eq:redcond}. Although Theorem~\ref{Prop:Lewenstein} does not provide a complete characterization of the AS set, it holds for any local dimensions and in this way, supports the conjecture that AS and AP are identical sets. Only for the qubit-qudit case, $(\mathbb C^2\otimes \mathbb C^d)$, it has been proven that AP is also a sufficient condition for AS~\cite{johnston_separability_2013}. However, for arbitrary bipartition dimensions, the equivalence between the AS and AP sets remains an open conjecture \cite{arunachalam_is_2015}.
 
\section{Improving sufficient criteria for absolute separability in arbitrary dimensions from convex geometry}
\label{sec:CH}

As mentioned previously, the sets of states detected by the inverse of the reduction map, and those enclosed within the Gurvits-Barnum ball are not directly comparable. However, since the AS set is convex and compact \cite{ganguly_witness_2014}, we can unify both criteria using convex geometrical tools into a single convex hull that yields stronger separability conditions.

\subsection{Extremal points of  the  convex hull of AS criteria }

The set of AS states detected through Theorem \ref{Prop:Lewenstein}, 
form simplexes in the space of eigenvalues. For arbitrary dimensions, $D=N\cdot M$, the vertices of these simplexes correspond to all permutations on the vector of eigenvalues. Explicitly, for $\alpha=2$, the largest simplex corresponds to the vector $\left(\frac{1}{D+2}, \cdots, \frac{1}{D+2}, 1-\frac{D-1}{D+2}\right)$ and its permutations, while for  $\alpha=-1$, the vertices are settled by  $\left(0, \frac{1}{D-1},\cdots, \frac{1}{D-1}\right)$ and its permutations.
A schematic representation of these sets for $D=3$ is depicted in Figure~\ref{fig:Simplex} using barycentric coordinates $\{\tilde{\lambda}_i\}$. In this representation, the maximally mixed state (MMS) lies at the origin and pure states correspond to the vertices of the simplex of normalized states.
\begin{figure}[htbp!]
\centering
\begin{tikzpicture}[scale=3.5]

\fill[black, opacity=0.15] 
    (0,0.81649658) -- 
    (-0.35355,0.204) --
    (0,0) -- 
    cycle;

\fill[black, opacity=0.3] 
        (0,0.326598) -- 
        (-0.35355,0.204) --
        (0,0) -- 
        cycle;

    \draw[blue, thick, solid] (0,0.81649658) -- (-0.70710678,-0.40824829) -- (0.70710678,-0.40824829) -- cycle;
    \draw[green, very thick] (0,0.326598) -- (0.3536,0.204) -- (0.2828,-0.1633) -- (0,-0.408) -- (-0.2828,-0.163299) -- (-0.35355,0.204) ;
    \draw[green, very thick] (0,0.326598) -- (-0.35355,0.204) ;
    \draw[orange, very thick, dotted] (0,-0.408) -- (-0.35355,0.204) -- (0.3536,0.204) -- cycle;
    \draw[red, very thick, dashed] (0,0.326598) -- (-0.2828,-0.163299) -- (0.2828,-0.1633) -- cycle;

\node at (0,0.326598) [cross out, draw=red, very thick, scale=0.6] {};
\node at (-0.2828,-0.163299) [cross out, very thick, draw=red, scale=0.6] {};
\node at (0.2828,-0.1633) [cross out, very thick, draw=red, scale=0.6] {};

\node at (0,-0.408) [diamond, fill=orange, scale=0.5] {};
\node at (0.3536,0.204) [diamond, fill=orange, scale=0.5] {};
\node at (-0.35355,0.204) [diamond, fill=orange, scale=0.5] {};
    
    \filldraw[fill=blue,draw=blue] (0,0.81649658) circle (0.7pt);
    \filldraw[fill=blue,draw=blue] (-0.70710678,-0.40824829) circle (0.7pt);
    \filldraw[fill=blue,draw=blue] (0.70710678,-0.40824829) circle (0.7pt);


\draw[->] (0,-0.408) -- (0,0.9) node[right] {$\lambda_0$};
\draw[->] (0.3536,0.204) -- (-0.77,-0.445) node[above left] {$\lambda_1$};
\draw[->] (-0.35355,0.204) -- (0.77,-0.445) node[above right] {$\lambda_2$};

\end{tikzpicture}
    \caption{Schematic representation of quantum states and the AS set for $D=3$ in barycentric coordinates. Normalized states fill the area enclosed by the blue vertices. In red (dashed) the simplex corresponding to $\lambda_{\min}\geq 1/(D+2)$, in orange(dots) the one  corresponding to  $\lambda_{\max}\leq 1/(D-1)$ (see Theorem~\ref{Prop:Lewenstein}).  In green the convex hull of both simplexes. The light shaded region fulfills the ordering  $\lambda_{0}\leq\lambda_{1}\leq\lambda_{2}$ we consider.
    The dark shadow polytope highlights the intersection of the convex hull with the ordered zone. The figure is illustrative as $D=3$ does not correspond to any bipartite splitting.}
    \label{fig:Simplex}
\end{figure}
The displayed simplexes shown for AS are, in fact, extreme due to the tightness of the values of $\alpha$ obtained from Eq.~\eqref{Inversereduction}. Thus, the convex hull arising from the union of both simplexes is optimal. 
The following Lemma fully describes, for given values of $\alpha_{\pm}$, the convex hull of the two simplexes. 

\begin{lemma}
\label{Lemma:GeneralAlphaPM} 
Let $\rho \in \mathcal B(\mathbb{C}^{N}\otimes\mathbb{C}^{M})$  be a normalized state in a Hilbert space of global dimension $D=N\cdot M$, and $\boldsymbol{\lambda} = \{\lambda_i \}_{i=0}^{D-1}$ the corresponding eigenvalues in a non-decreasing order, that is, $0\leq\lambda_0\leq\lambda_1 \cdots \leq\lambda_{D-1}$, with $\sum_{i=0}^{D-1}\lambda_i = 1$.  Given $\alpha_{+}\geq 0$ and $\alpha_{-}\leq0$, the vector $\boldsymbol{\lambda}$ is contained in the convex hull settled by conditions 

 \begin{equation}
\label{eq:DemoLemma1}
\lambda_{0}\geq \frac{1}{D+\alpha_{+}},  
\end{equation}
\begin{equation}
\label{eq:DemoLemma2}
\lambda_{D-1}\leq \frac{1}{D+\alpha_{-}},   
\end{equation} 
if and only if
 \begin{equation}
 \label{eq:GeneralAlphaPM}
K \cdot \sum_{i=0}^{c-1} \lambda_i + \left[D -K \cdot c + \alpha_+ \right] \cdot \lambda_c \geq 1,
\end{equation}
where $ K= \left( 1 - \frac{\alpha_+}{\alpha_-} \right)$, and $c = \left\lceil \frac{\alpha_+ + \alpha_- (D - 1 + \alpha_+)}{\alpha_- - \alpha_+} \right\rceil$.
\end{lemma}
\begin{proof} 
In what it follows we demonstrate how the single inequality given by Eq.~\eqref{eq:GeneralAlphaPM}, is indeed necessary and sufficient to describe the convex hull given by conditions Eq.~\eqref{eq:DemoLemma1} and \eqref{eq:DemoLemma2}, for any dimension $D$. See also Appendix~\ref{AppendixDerivationSimplex} for a numerical study.

The \textit{if} part of the proof follows from an exhaustive analytical characterization of the extremal points defined by the constraints Eq.~\eqref{eq:DemoLemma1},\eqref{eq:DemoLemma2}, together with the $D-1$ ordering conditions $\{ \lambda_{k} \leq \lambda_{k+1}\}_{k=0,1,..,D-2}$ and the PSD constraint $\lambda_0\geq 0$. The feasible set of eigenvalues that satisfy these conditions forms a polytope in the $\mathbb{R}^{D-1}$ affine subspace defined by the normalization condition $\sum_{i=0}^{D-1}\lambda_i = 1$ (see the dark  shadow  polytope in Figure~\ref{fig:Simplex}). In the dual picture, such a polytope can be equivalently described as a convex combination of extremal points (vertices). Each vertex is necessarily specified by the saturation of $D-1$ inequalities and the fulfillment of the remaining ones. The proof relies on the (analytical) computation of all such points and the verification that they are detected either by Eq.~\eqref{eq:DemoLemma1} or Eq.~\eqref{eq:DemoLemma2}.

To begin with the characterization of such vertex, we first consider the case  \( \{ \lambda_k =  \lambda_{k+1} \}_{k=0,1,\dots,D-2} \).
Under normalization, it leads to $\boldsymbol{\lambda}_{\rm MMS} = \left(1/D,1/D,...,1/D\right)$, corresponding to the MMS, which is detected by both conditions. The second option is to consider Eq.~\eqref{eq:GeneralAlphaPM} and pick $D-2$ conditions (i.e., all except one) from $\{ \lambda_k = \lambda_{k+1}\}_{k=0,1,..,D-2}$. Removing the $r+1$-th condition from the last set implies that the vector is of the form $\boldsymbol{\lambda}^{(r)} = (\underbrace{a,...,a}_{r},\underbrace{b,...,b}_{D-r})$ with $a\leq b$. Notice that geometrically, $\boldsymbol{\lambda}^{(r)}$ generate a cone whose apex corresponds to the MMS which is enclosed by the $D-1$ eigenvalue ordering conditions.The normalization condition $r\cdot a + (D-r)\cdot b = 1 $ constraints these two quantities. Let us find $a\leq 1/D$ and $b\geq 1/D$ such that the proposed inequality Eq.~\eqref{eq:GeneralAlphaPM} is also saturated by $\boldsymbol{\lambda}^{(r)}$. For $r\leq c$, we obtain $a = (r+\alpha_-)/[r(D+\alpha_-)],b = 1/(D+\alpha_-)$, which is detected by Eq.~\eqref{eq:DemoLemma2}. On the other hand, for $r>c$, we obtain $a = 1/(D+\alpha_+), b = (D-r+\alpha_+)/[(D-r)(D+\alpha_+)]$, which is detected by Eq.~\eqref{eq:DemoLemma1}. The remaining cases involve $\lambda_0 = 0$ (i.e., rank-deficient states). The only solution here is for $\alpha_- = -1$, 
$r=1, a= b= 1/(D-1)$, since considering $r> 1$ or $0\geq \alpha_->-1$ leads to incompatible equations for all $D$. 

The \textit{only if} part is a consequence of the fact that condition Eq.~\eqref{eq:GeneralAlphaPM} is Schur-convex, and it is saturated by the extremals of both simplices, i.e. $\boldsymbol{\lambda}_{\pm} = \left(\frac{1}{D+\alpha_{\pm}}, \cdots, \frac{1}{D+\alpha_{\pm}}, 1-\frac{D-1}{D+\alpha_{\pm}}\right)$. Further elaboration on this claim can be found in the Appendix~\ref{Appendix:Ansatz}.
\end{proof}

The previous lemma, together with the convexity of the AS set, enables to derive new sufficient conditions for AS, which are strictly stronger than those proposed in Theorem~\ref{Prop:Lewenstein}. 

\begin{theorem}
\label{Theorem2:LinearIneqaulityBipartite}
 Let $\rho$ be a normalized bipartite state acting in a Hilbert space 
 $\mathbb{C}^N\otimes\mathbb{C}^{M}$ of dimension $D=N\cdot M$, and let $\boldsymbol{\lambda} = \{\lambda_i \}_{i=0}^{D-1}$ denote its corresponding eigenvalues in non-decreasing order, i.e.,  $0\leq \lambda_0\leq\lambda_1\cdots\ \leq \lambda_{D-1}$ with $\sum_{i=0}^{D-1}\lambda_i = 1$. If
\begin{equation}
    \label{eq:TwoSimplex}
     3\cdot \sum_{i=0}^{\lceil \frac{D-1}{3}\rceil-1}\lambda_{i}+\left( D+2-3\cdot\left\lceil\frac{D-1}{3}\right\rceil \right)\cdot \lambda_{\lceil\frac{D-1}{3}\rceil}\geq 1,
\end{equation}
then $\rho$ is AS. 
\end{theorem}
\begin{proof}
The proof follows from Lemma~\ref{Lemma:GeneralAlphaPM} by considering $\alpha_{+}=2$ and $\alpha_{-}=-1$ (Eq.~\eqref{eq:redcond}).
\end{proof}
Let us summarize our findings and illustrate them: the single inequality given by Eq.~\eqref{eq:TwoSimplex} is satisfied by all states detected by $\lambda_{0}\geq 1/(D+2)$ and $\lambda_{D-1}\leq 1/(D-1)$. Moreover, it is also satisfied by states not detected by the previous conditions but which belong to the convex hull of both simplexes.  
For example, consider a bipartite $2$-qubit state, with total dimension $D=4$, whose eigenvalues are given by $\boldsymbol{\lambda}=(1/12, 1/4, 1/4, 5/12)$. This state cannot be certified AS through the conditions of Theorem \ref{Prop:Lewenstein} which correspond to $\lambda_0\geq 1/6$ nor by $ \lambda_{D-1}\leq 1/3$, yet the state is detected by Eq.~\eqref{eq:TwoSimplex}. This ensures the existence of at least one convex decomposition onto states that are detected either from $\lambda_0$ or 
$\lambda_{D-1}$. Indeed, such decomposition is given by the equally weighted contribution of the vertex of each simplex $\boldsymbol{\lambda}=\frac{1}{2}\left[(1/6,1/6,1/6, 1/2) + (0, 1/3, 1/3, 1/3) \right]$.\\

The inequality given by Eq. \eqref{eq:TwoSimplex} involve several eigenvalues, however, it can be relaxed to a hierarchy set of weaker conditions for AS,  containing fewer eigenvalues. Given the  non-decreasing  ordering, $\{\lambda_{k}\leq\lambda_{k+1}\}$, one can replace $\lambda_{k+1}$ by $\lambda_{k}$ in Eq.~\eqref{eq:TwoSimplex}, to yield new sufficient conditions for AS. Albeit being weaker, such conditions are based on fewer eigenvalues than the original theorem, which enables the detection of AS from incomplete knowledge of the spectrum. By iteratively applying the above substitution for $i\leq \kappa\leq \lceil \frac{D-1}{3}\rceil$, we derive the following hierarchy of AS conditions:

\begin{equation}
\begin{aligned}
    \kappa &= 0:  \quad 3\cdot\lambda_{0} + \cdots + 3\cdot\lambda_{\lceil\frac{D-1}{3}\rceil-1}+(D+2-3\cdot\left\lceil \frac{D-1}{3}\right\rceil)\cdot\lambda_{\lceil\frac{D-1}{3}\rceil}  \geq 1, \\
    \kappa &= 1:  \quad 3\cdot\lambda_{0} + \cdots + (D+5-3\cdot\left\lceil \frac{D-1}{3}\right\rceil)\cdot\lambda_{\lceil\frac{D-1}{3}\rceil-1}+ \geq 1, \\
     & \vdots \\
     \kappa &= \left\lceil\frac{D-1}{3}\right\rceil:  \quad
    (D+2) \cdot \lambda_{0}  \geq 1.
\end{aligned}
\label{eq:ReducedEigenvalues}
\end{equation}
Each $\kappa$-level in Eq.~\eqref{eq:ReducedEigenvalues} is weaker than the previous  $\kappa-1$, but does not require the knowledge of eigenvalue $\lambda_{\kappa}$, allowing also  for some intermediate eigenvalues to be unknown. By considering the two smallest eigenvalues, we derive the weaker condition $(D-1)\lambda_{1}+3\lambda_{0}\geq 1$, which already includes the $2D$ vertices of both simplexes despite being suboptimal. The whole  polytope of AS states is enclosed by all the possible permutations on the eigenvalues of inequality Eq.~\eqref{eq:TwoSimplex}, but just as with the case of the simplexes in Figure~\ref{fig:Simplex}, one linear inequality is enough for ordered eigenvalues (see Lemma~\ref{Lemma:GeneralAlphaPM}). 

\subsection{Extending Gurvits-Barnum criterion for arbitrary dimensions}
Let us observe that condition $\lambda_{\max}(\rho)\leq 1/(N\cdot M-1)$ is completely enclosed within the Gurvits-Barnum ball. Thus, we focus on the set of states detected by the convex hull (CH) of the ball (Theorem~\ref{thm:Gurvits}) plus the condition $\lambda_{\min}(\rho) \geq 1/(N\cdot M+2)$ (Theorem~\ref{Prop:Lewenstein}). This approach improves the existing separability criteria for the general dimension $D=N\cdot M$. Notably, the Hilbert-Schmidt distance between the vertices of the simplex and the maximally mixed state tends towards twice the radius of the ball as $N\cdot M\rightarrow\infty$. Figure~\ref{fig:Trvslmin} provides a 2D-sketch of the geometry of such a convex hull and how different regions are detected. 
\begin{figure}[ht!]
    \centering
    \begin{subfigure}[b]{0.38\textwidth}
    \vspace{-0.1\textwidth}
    \includegraphics[width=\textwidth]{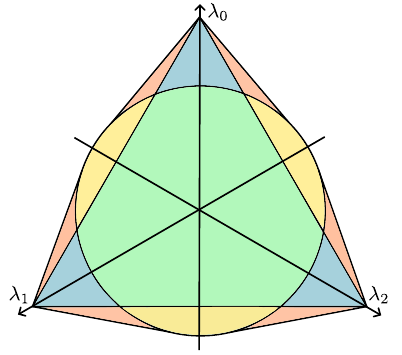}
    \vspace{0.1\textwidth}
    \label{fig:2b}
    \end{subfigure}
       \hspace{0.05\textwidth} 
       \vspace{0.01\textwidth} 
    \begin{subfigure}[b]{0.45\textwidth}
    \includegraphics[width=\textwidth]
    {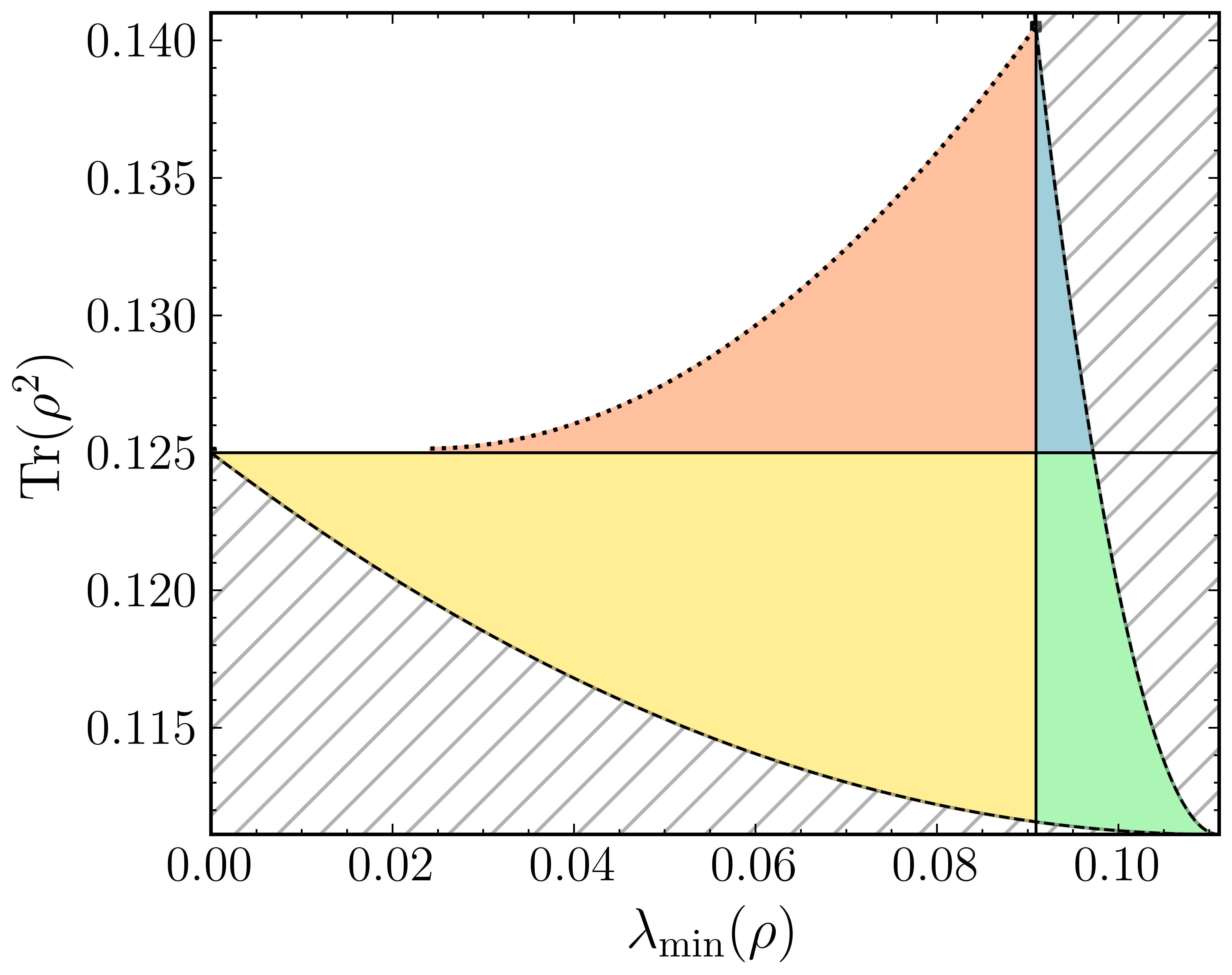}
    \label{fig:2a}
    \end{subfigure}
    \hspace{0.01\textwidth}
     \vspace{0.01\textwidth}

    \vspace{-0.05\textwidth}
    \caption{Detected absolutely separable (AS) states.
    \textbf{Left panel:} Schematic two-dimensional representation of the convex hull (orange) of the Gurvits-Barnum ball (yellow) and our simplex (blue) in barycentric coordinates (cf. Figure~\ref{fig:Simplex}).
    \textbf{Right panel:} Purity $\mathrm{Tr}(\rho^2)$ and minimal eigenvalue $\lambda_{\min}(\rho)$ of the detected states for a $2$-qutrit system. The points in the striped region do not correspond to physical states.}
    \label{fig:Trvslmin}
\end{figure}
While the analytical expression for the convex hull (Figure~\ref{fig:Trvslmin}) 
in high dimensions is long and challenging to work with, it is possible to express the problem as a disciplined convex program (DCP), as formalized in the next proposition.

\begin{proposition}
\label{Prop:SDP}
Let $\rho$ be a bipartite state acting in a Hilbert space $\mathbb{C}^{N}\otimes\mathbb{C}^{M}$ of global dimension $D=N\cdot M$. If the problem 
    \begin{equation}
\label{eq:SDPMapsNorm}
    \begin{array}{crl}
    \min_{\{\varrho_1\geq 0, \varrho_2\geq 0\}}& 0&\\
    \mbox{s.t.}& \rho& = \varrho_1 + \varrho_2\\
    &\lambda_{\min}(\varrho_1)&\geq \frac{\mathrm{Tr}(\varrho_1)}{D+2} \\
    & \mathrm{Tr}(\varrho^2_2) &\leq \frac{\mathrm{Tr}(\varrho_2)^2}{D-1}
    \end{array}
\end{equation}
is feasible, then $\rho$ is absolutely separable (AS).
\end{proposition}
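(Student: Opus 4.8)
The plan is to show that the feasibility of the convex program~\eqref{eq:SDPMapsNorm} exhibits $\rho$ as a convex combination (up to normalization) of one state lying in the simplex detected by Theorem~\ref{Prop:Lewenstein} (the $\alpha=2$ branch, i.e.\ $\lambda_{\min}\geq 1/(D+2)$) and one state lying in the Gurvits-Barnum ball of Theorem~\ref{thm:Gurvits}. Since both of these sets are subsets of the AS set, and the AS set is convex and compact~\cite{ganguly_witness_2014}, the convex combination is again AS, which is exactly what we want to prove.

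\textbf{First I would} deal with the normalization bookkeeping. Suppose $\{\varrho_1,\varrho_2\}$ is a feasible point. Write $p_i=\mathrm{Tr}(\varrho_i)$ for $i=1,2$; from $\rho=\varrho_1+\varrho_2$ and $\mathrm{Tr}(\rho)=1$ we get $p_1+p_2=1$, and $p_1,p_2\geq 0$ because $\varrho_1,\varrho_2\succeq 0$. Discard the trivial degenerate cases $p_1=0$ or $p_2=0$ (then $\rho$ equals $\varrho_2$ or $\varrho_1$ respectively and one of the two membership arguments below applies directly, with the other factor absent). Otherwise set $\tilde\varrho_i=\varrho_i/p_i$, so each $\tilde\varrho_i$ is a genuine normalized state and $\rho=p_1\tilde\varrho_1+p_2\tilde\varrho_2$.

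\textbf{Next I would} verify that each normalized piece lands in an AS subset. For $\tilde\varrho_1$: the third constraint reads $\lambda_{\min}(\varrho_1)\geq \mathrm{Tr}(\varrho_1)/(D+2)=p_1/(D+2)$, and since $\lambda_{\min}(\varrho_1)=p_1\,\lambda_{\min}(\tilde\varrho_1)$ (eigenvalues scale linearly), dividing by $p_1>0$ gives $\lambda_{\min}(\tilde\varrho_1)\geq 1/(D+2)$; by Theorem~\ref{Prop:Lewenstein} the state $\tilde\varrho_1$ is AS. For $\tilde\varrho_2$: the fourth constraint is $\mathrm{Tr}(\varrho_2^2)\leq \mathrm{Tr}(\varrho_2)^2/(D-1)=p_2^2/(D-1)$, and $\mathrm{Tr}(\varrho_2^2)=p_2^2\,\mathrm{Tr}(\tilde\varrho_2^2)$, so dividing by $p_2^2>0$ yields $\mathrm{Tr}(\tilde\varrho_2^2)\leq 1/(D-1)$; by Theorem~\ref{thm:Gurvits} the state $\tilde\varrho_2$ is AS. Finally, $\rho=p_1\tilde\varrho_1+p_2\tilde\varrho_2$ is a convex combination of two AS states, hence AS by convexity of the AS set. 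One should also note the constraints are disciplined-convex: $\lambda_{\min}(\varrho_1)\geq t$ is the linear matrix inequality $\varrho_1\succeq t\,\mathds{1}$, and $\mathrm{Tr}(\varrho_2^2)\leq \mathrm{Tr}(\varrho_2)^2/(D-1)$ is a second-order-cone/convex-quadratic constraint in the entries of $\varrho_2$, so the program is indeed a DCP as claimed.

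\textbf{The main subtlety} — rather than a deep obstacle — is the homogenization step: the constraints were deliberately written in scale-covariant form ($\mathrm{Tr}(\varrho_i)$ appearing explicitly) precisely so that the subnormalized summands $\varrho_1,\varrho_2$ can absorb the weights $p_1,p_2$ while still certifying membership of the \emph{renormalized} states in the fixed-trace AS criteria; making this rescaling rigorous, together with the boundary cases $p_1p_2=0$, is the only place care is needed. Everything else is a direct invocation of Theorems~\ref{thm:Gurvits} and~\ref{Prop:Lewenstein} plus convexity of AS.
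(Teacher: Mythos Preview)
Your proof is correct and follows the same approach as the paper: feasibility yields $\varrho_1,\varrho_2$ that are certified AS by Theorem~\ref{Prop:Lewenstein} and Theorem~\ref{thm:Gurvits} respectively, and convexity of the AS set gives the conclusion. You simply make explicit the normalization/rescaling step and the degenerate boundary cases that the paper's two-line proof leaves implicit.
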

\begin{proof}If the problem is feasible, it implies that the algorithm indeed succeeded in finding $\{\varrho_1, \varrho_2\}$ fulfilling the constraints.  Hence, both states $\{\varrho_1, \varrho_2\}$ are certified AS from the positivity of the inverse map introduced in Theorem~\ref{thm:Gurvits} or from the Gurvits-Barnum ball defined in Theorem~\ref{Prop:Lewenstein} (numerically expressed as bounds on the Frobenius norm $||\rho_2 - \mbox{Tr}(\rho_2)\cdot\mathds{1}/D||_{F} \leq \mbox{Tr}(\rho_2)/\sqrt{D(D-1)}$) , respectively. Finally, by convexity, their sum $\rho$ is also AS.
\end{proof}

\subsection{Unifying separability criteria into a single convex program}
\label{sec:convexgeneral}
Here, we outline how to generalize the disciplined convex program from Proposition \ref{Prop:SDP} into potentially stronger sufficient conditions for separability. Although such separability criteria, as already discussed, may arise from a plethora of different techniques, it is interesting to notice that all of them can be ultimately cast as a set of inequalities, $\{f_i(\rho)\leq 0 \}$, where the functions $\{f_i\}$ are convex in the state $\rho$. Let us designate by $\mathcal{R}_i = \{\rho\geq 0\mbox{ s.t. }f_i(\rho)\leq 0 \} $ the set of (unnormalized) states detected by each criterion $i$. From the curvature of $\{f_i \}$, the sets $\{\mathcal{R}_i\}$ are convex as well. However, such sets, stemming from different conditions, may be incomparable. How can we build a new condition that includes all of them? The following theorem offers a solution to this question and generalizes Proposition~\ref{Prop:SDP}. 

\begin{theorem}
\label{thm:SDP}
Let $\rho$ be a quantum state, and $\{f_{i} \}$ a set of convex functions leading to a set of non comparable inequalities $\{f_i(\rho)\leq 0 \}$, each of them corresponding to a sufficient condition for separability. If the convex program
\begin{equation}
\label{eq:central_program}
    \begin{array}{crl}
    \min_{\{\varrho_i\geq 0\}}& 0&\\
    \mbox{s.t.}& \rho& = \sum_{i}\varrho_i\\
    &\{f_i(\varrho_i)&\leq 0\}\\
    \end{array}
\end{equation}
is feasible, then, $\rho$ is separable. 
\end{theorem}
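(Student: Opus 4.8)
The plan is to reduce Theorem~\ref{thm:SDP} to the convexity of the separable set together with the defining property of each criterion $f_i$. First I would observe that feasibility of the program~\eqref{eq:central_program} produces an explicit decomposition $\rho = \sum_i \varrho_i$ into positive-semidefinite operators $\varrho_i \succeq 0$ with $f_i(\varrho_i)\le 0$ for every $i$. The key point is that each condition $f_i(\cdot)\le 0$ is, by hypothesis, a sufficient condition for separability applied to the (unnormalized) operator $\varrho_i$: since the criteria are unitarily invariant and homogeneous in the natural way, $f_i(\varrho_i)\le 0$ certifies that $\varrho_i/\mathrm{Tr}(\varrho_i)$ is separable (or, if $\mathrm{Tr}(\varrho_i)=0$, then $\varrho_i=0$ trivially). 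Hence each $\varrho_i$ is a nonnegative multiple of a separable state.

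Next I would invoke convexity of the set of separable (unnormalized) operators: a sum of nonnegative multiples of separable density matrices is again a nonnegative multiple of a separable density matrix, because the product-state decompositions $\varrho_i = \sum_j p_{ij}\ketbra{e_{ij},f_{ij}}{e_{ij},f_{ij}}$ (with $p_{ij}\ge 0$) can simply be concatenated. Therefore $\rho = \sum_i \varrho_i = \sum_{i,j} p_{ij}\ketbra{e_{ij},f_{ij}}{e_{ij},f_{ij}}$, which after normalization (note $\mathrm{Tr}(\rho)=\sum_{i,j}p_{ij}$) exhibits $\rho$ as a convex mixture of product states. This is exactly separability of $\rho$, completing the argument. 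The structure mirrors the proof of Proposition~\ref{Prop:SDP}, only with an arbitrary finite family of convex certificates in place of the two specific ones.

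I would include one remark that the same proof works verbatim if one replaces ``separable'' by any convex, unitarily invariant class $\mathcal{C}$ that is closed under the operations above (e.g.\ PPT, or absolutely PPT), provided each $f_i$ is a valid sufficient condition for membership in $\mathcal{C}$; this explains why the theorem also yields improved AP tests. I would also note that when the $f_i$ are drawn from the AS criteria of Theorems~\ref{thm:Gurvits} and~\ref{Prop:Lewenstein}, the conclusion ``$\rho$ separable'' upgrades automatically to ``$\rho$ AS'', since those $f_i$ depend only on the spectrum, so the whole construction is unitarily covariant.

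The main obstacle I anticipate is not the logical skeleton---which is short---but making precise the passage from ``$f_i(\varrho_i)\le 0$'' to ``$\varrho_i$ is a nonnegative multiple of a separable state'' uniformly across heterogeneous criteria: one must be careful that each $f_i$ is genuinely convex \emph{and} correctly homogeneous under rescaling $\varrho_i \mapsto t\varrho_i$, so that the program constraints are themselves convex (DCP-representable) and so that the certificate survives normalization. For the purity-type constraint this means writing it in the homogeneous form $\mathrm{Tr}(\varrho_i^2)\le \mathrm{Tr}(\varrho_i)^2/(D-1)$ as in~\eqref{eq:SDPMapsNorm}, and for the eigenvalue-type constraint in the form $\lambda_{\min}(\varrho_i)\ge \mathrm{Tr}(\varrho_i)/(D+2)$; verifying convexity of each such rewritten constraint (and the edge case $\mathrm{Tr}(\varrho_i)=0$) is the only place where real care is needed.
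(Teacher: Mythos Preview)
Your proof is correct and follows essentially the same approach as the paper: feasibility yields a decomposition $\rho=\sum_i\varrho_i$ with each $\varrho_i$ certified separable by its respective criterion, and convexity of the separable cone then implies separability of $\rho$. In fact you are more explicit than the paper, which simply refers back to the proof of Proposition~\ref{Prop:SDP}; your additional care about homogeneity under rescaling and the $\mathrm{Tr}(\varrho_i)=0$ edge case is a useful technical clarification that the paper leaves implicit.
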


\begin{proof}
The proof is analogous to the one of Proposition~\ref{Prop:SDP}, where the conditions for AS are introduced as generic convex constraints.   
\end{proof}

By construction, the previous theorem implicitly tests whether the state $\rho$ belongs to the convex hull of $\mathcal{R}$, where $\mathcal{R}=\cup_i\mathcal{R}_i$, is the union of the detected sets \cite{roberts_convex_1973}. 
Consequently, Theorem \ref{thm:SDP} enables the certification of states $\rho$ that are not detected by any criteria $\{f_i(\rho)\leq 0\}$ alone, but that nonetheless can be decomposed into states that are detected by such criteria. In turn, convex problems of the structure Eq.~\eqref{eq:central_program}, are not only mathematically well-posed \cite{skrzypczyk_semidefinite_2023, boyd_convex_2004}, but for some instances they may admit polynomial-time solutions. In this regard, there is a number of available solvers tailored to these problems~\cite{mosek_mosek_2024}, as well as user-friendly parsers~\cite{cvxpy_cvxpy_2024}.

\section{Multipartite absolute separability}
\label{sec:multi}
For multipartite settings, much less is known concerning (absolute) separability. We recall that a state $\rho\in \mathcal{B}(\mathbb{C}^N\otimes\mathbb{C}^M\otimes\dots\otimes \mathbb{C}^T)$ is (fully) separable if it admits a convex decomposition onto product states 
\begin{equation}
   \rho = \sum_ip_i\ketbra{e_i,f_i, \dots,g_i}, 
\end{equation}
where $\ket{e_i}\in \mathbb{C}^M$, $\ket{f_i}\in \mathbb{C}^N$, $\dots$,  $\ket{g_i}\in \mathbb{C}^T$; $\{ p_i\geq 0\}$ and $\sum_ip_i = 1$.

In complete analogy with the bipartite case, the AS set corresponds to all states that remain (fully) separable under the action of any global unitary transformation. We immediately notice that all the bipartite AS conditions discussed previously can be extended to a multipartite scenario of total dimension $D=N\cdot M\dots T$ as sufficient AS criteria with respect to all bipartitions.
Furthermore, given a state $\sigma\in\mathcal{B}((\mathbb{C}^{d})^{\otimes N})$ detected  as AS by Theorem~\eqref{Prop:Lewenstein} w.r.t.\ to all possible bipartitions, all its reduced density matrices $\Tr_{k}(\sigma)=\sigma_{N-k}$ will also be detected as AS with respect to any bipartiton by the map defined in the reduced space $(\mathbb{C}^{d})^{\otimes N-k}$. Positivity of the inverse map in the global Hilbert space leads to the following condition in the reduced space:
\begin{equation}
\label{eq:FullySeparablePartitions2}
    \sigma_{N-k}\hspace{0.01\textwidth}\geq\hspace{0.01\textwidth} \frac{d^{k}\cdot \mathds{1}}{d^{N}+2}\hspace{0.01\textwidth}\geq\hspace{0.01\textwidth}\frac{\mathds{1}}{d^{N-k}+2}.
\end{equation}
However, we recall that certifying separability with respect to all bipartitions in the state and its successive reduced matrices does not imply full separability \cite{dur_separability_1999, neven_entanglement_2018}. That is to say, certain multipartite states are entangled despite being separable with respect to its bipartitions. Paradigmatic examples of such states  can be built from  unextendible product bases (UPB)~\cite{bennett_unextendible_1999} or  are given by  multiqubit PPT entangled states in the symmetric sector~\cite{augusiak_entangled_2012, tura_four-qubit_2012}.  

The previous observation evidences the challenges to certify full separability as a much more stringent condition than biseparability. A bound on the exponent of the radius of the multipartite fully separable ball (Gurvits-Barnum) was provided ~\cite{gurvits_better_2005}. This norm-based condition represents one of the few known sufficient criteria for full absolute separability.

\begin{lemma}
\label{Lemma:Multipartite}\cite{gurvits_better_2005}
Let $\rho$ be a normalized $N$-qudit state acting on a Hilbert space of total dimension $D=d^{N}$. If
\begin{equation}
\label{eq:PurityA}
    \Tr(\rho^{2})\leq \frac{1}{D-A},
\end{equation}
with $A=2^{2-N}$, then $\rho$ is absolutely fully separable.
\end{lemma}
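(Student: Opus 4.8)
# Proof Proposal for Lemma~\ref{Lemma:Multipartite}

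The plan is to reduce the multipartite full-separability statement to an iterated application of the bipartite Gurvits--Barnum ball (Theorem~\ref{thm:Gurvits}), exploiting the fact that the purity threshold for full separability can be built recursively over the tensor factors. First I would set up the induction on the number of parties $N$. For $N=1$ every state is trivially (fully) separable, and $A = 2^{2-1} = 2$, so the claim $\Tr(\rho^2) \leq 1/(d-2)$ holds vacuously in the sense that the single-party case imposes no separability obstruction; the genuine base case is $N=2$, where $A = 2^{2-2} = 1$ and the condition $\Tr(\rho^2) \leq 1/(D-1)$ is exactly the bipartite Gurvits--Barnum ball of Theorem~\ref{thm:Gurvits}. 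So the base case is already established.

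For the inductive step, suppose the lemma holds for $N-1$ qudits. Given an $N$-qudit state $\rho$ on $(\mathbb{C}^d)^{\otimes N}$ with $\Tr(\rho^2) \leq 1/(d^N - 2^{2-N})$, I would view $\rho$ as a bipartite state across the cut (first qudit)\,$|$\,(remaining $N-1$ qudits), of local dimensions $d$ and $d^{N-1}$. The strategy, following Gurvits--Barnum's original argument for the multipartite ball, is \emph{not} to apply the bipartite ball once, but to use the slightly stronger fact that a state inside a sufficiently small Hilbert--Schmidt ball around the maximally mixed state admits a decomposition as a mixture of product states across one cut in which \emph{each piece still lies inside the appropriate ball on the smaller factor}. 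Concretely, one shows that if $\Tr(\rho^2) \le 1/(d^N - A_N)$ with $A_N = 2^{2-N}$, then $\rho = \sum_j q_j\, \sigma_j^{(1)} \otimes \tau_j^{(N-1)}$ where each $\tau_j^{(N-1)}$ is a state on $(\mathbb{C}^d)^{\otimes(N-1)}$ satisfying $\Tr((\tau_j^{(N-1)})^2) \le 1/(d^{N-1} - A_{N-1})$ with $A_{N-1} = 2^{3-N} = 2A_N$. Applying the induction hypothesis to each $\tau_j^{(N-1)}$ gives full separability of every $\tau_j^{(N-1)}$, hence full separability of $\rho$. The recursion $A_{N-1} = 2 A_N$ starting from $A_2 = 1$ yields exactly $A_N = 2^{2-N}$, which is where the peculiar constant comes from.

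The main obstacle is establishing the quantitative claim in the inductive step: that membership in the $N$-party ball forces a product decomposition across one cut whose components sit inside the $(N-1)$-party ball with the doubled constant. This is the heart of Gurvits--Barnum's norm estimate. The key inequalities to assemble are: (i) the operator-norm bound $\|\rho - \mathds{1}/D\|_{\mathrm{op}} \le \|\rho - \mathds{1}/D\|_{\mathrm{HS}} = \sqrt{\Tr(\rho^2) - 1/D}$, converting the purity hypothesis into a spectral-radius statement for the traceless part; (ii) a stability/continuity estimate showing that the set of states admitting a cross-cut product decomposition with controlled components contains a ball of a definite radius around $\mathds{1}/D$, where the radius scales like the radius of the $(N-1)$-party ball on the larger factor times a geometric factor of $1/2$ coming from splitting off one qudit. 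I would handle (ii) by writing a candidate decomposition explicitly — e.g. using the flat decomposition $\mathds{1}_d/d \otimes (\cdot)$ plus a perturbative correction absorbed into the first factor via a basis of (rescaled, shifted) Hermitian operators that remain positive — and then tracking how the perturbation budget halves at each recursion level. The bookkeeping of constants is delicate; the conceptual content is entirely the one-step "peel off a qudit, pay a factor of two in the ball constant" lemma, after which the induction and the formula $A_N = 2^{2-N}$ follow mechanically.
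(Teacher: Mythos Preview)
The paper does not supply its own proof of this lemma; it is quoted from Gurvits and Barnum~\cite{gurvits_better_2005} and used as a black box to feed Theorem~\ref{thm:alphaSep}. So there is nothing in the paper to compare your attempt against.

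That said, your outline is a faithful high-level summary of the Gurvits--Barnum argument: induct on the number of parties, at each step view the state as bipartite across the cut (one qudit)\,$|$\,(rest), and show that membership in the $N$-party ball forces a product decomposition across that cut whose $(N-1)$-party components remain inside the $(N-1)$-party ball, at the price of halving the constant $A$. The recursion $A_{N-1}=2A_N$ with base $A_2=1$ then gives $A_N=2^{2-N}$ exactly as you say. You correctly flag step~(ii) --- producing the explicit one-cut decomposition with controlled component purities --- as the real content; this is precisely the norm estimate that Gurvits and Barnum prove, and your description of the mechanism (pass to the traceless part, expand over a Hermitian operator basis on the peeled-off factor, absorb the perturbation while tracking the factor-of-two loss) is along their lines. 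What you have is a correct plan rather than a proof: the delicate constant-tracking in (ii) is the entire substance of~\cite{gurvits_better_2005}, and you have not carried it out.
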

For the specific case of qubits, $d = 2$, this bound can be improved to $A=\frac{\beta\cdot 2^{N} }{\beta + 3^{N}}$, where $\beta = 54/17$ \cite{hildebrand_entangled_2007}. In the same spirit as Theorem~\ref{Prop:Lewenstein}, Lemma~\ref{Lemma:Multipartite} allows us to formalize the following sufficient condition for AS, stemming from the reduction-like map Eq.~\eqref{eq:ReductionMap}.

\begin{theorem}
\label{thm:alphaSep}
Let $\rho$ be an $N-$qudit state of total dimension $D=d^{N}$ and let $\Lambda_\alpha(\rho) = \mathrm{Tr}(\rho)\mathds{1} + \alpha \rho$ be a family of reduction-like maps. If 
\begin{equation}
\label{eq:boundAlphaFull}
\alpha_{-}^*\leq \alpha\leq \alpha_{+}^*,
\end{equation}
then $\sigma= \Lambda_\alpha(\rho)$ is absolutely fully separable, where 
$\alpha_{\pm}^{*}=\frac{A\pm\sqrt{A\cdot(D-1)\cdot(D-A)}}{D-A-1}$ and $A=2^{2-N}$. 
\end{theorem}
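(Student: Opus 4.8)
The plan is to combine the multipartite Gurvits--Barnum ball (Lemma~\ref{Lemma:Multipartite}) with the action of the reduction-like map $\Lambda_\alpha$, exactly in the spirit of how Theorem~\ref{Prop:Lewenstein} derived single-eigenvalue conditions from Theorem~\ref{thm:Gurvits}. The key observation is that, since $\Lambda_\alpha$ is unitarily equivariant, it suffices to show that whenever $\alpha_-^*\le\alpha\le\alpha_+^*$ the image $\sigma=\Lambda_\alpha(\rho)$ lies inside the full-separability ball of Lemma~\ref{Lemma:Multipartite}, i.e.\ $\mathrm{Tr}(\sigma^2)/\mathrm{Tr}(\sigma)^2\le 1/(D-A)$ with $A=2^{2-N}$. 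Once that purity bound holds, Lemma~\ref{Lemma:Multipartite} immediately gives full separability of $\sigma$.

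First I would compute $\mathrm{Tr}(\sigma)$ and $\mathrm{Tr}(\sigma^2)$ for $\sigma=\mathrm{Tr}(\rho)\mathds{1}+\alpha\rho$ with $\rho$ normalized. One gets $\mathrm{Tr}(\sigma)=D+\alpha$ and $\mathrm{Tr}(\sigma^2)=\mathrm{Tr}\big(\mathds{1}+2\alpha\rho+\alpha^2\rho^2\big)=D+2\alpha+\alpha^2\mathrm{Tr}(\rho^2)$. Plugging into the normalized purity condition $\mathrm{Tr}(\sigma^2)\le \mathrm{Tr}(\sigma)^2/(D-A)$ yields
\begin{equation}
(D-A)\big(D+2\alpha+\alpha^2\,\mathrm{Tr}(\rho^2)\big)\le (D+\alpha)^2 .
\end{equation}
Rearranging as a quadratic in $\alpha$, the coefficient of $\alpha^2$ is $(D-A)\mathrm{Tr}(\rho^2)-1$; the coefficient of $\alpha$ is $2(D-A)-2D=-2A$; and the constant term is $D(D-A)-D^2=-AD$. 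So the inequality becomes $\big[(D-A)\mathrm{Tr}(\rho^2)-1\big]\alpha^2-2A\alpha-AD\le 0$.

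The cleanest route is then to specialize to the extremal case where $\rho$ is a pure state, $\mathrm{Tr}(\rho^2)=1$: the $\alpha^2$-coefficient becomes $D-A-1$ (positive since $D\ge 2$, $A\le 1$), and the admissible interval of $\alpha$ is bounded by the roots of $(D-A-1)\alpha^2-2A\alpha-AD=0$, which are precisely $\alpha_\pm^*=\big(A\pm\sqrt{A(D-1)(D-A)}\big)/(D-A-1)$ after simplifying the discriminant $4A^2+4AD(D-A-1)=4A(D-1)(D-A)$. I would then argue that the pure-state case is the worst case: for fixed $\alpha$ in this interval, the left-hand side $\big[(D-A)\mathrm{Tr}(\rho^2)-1\big]\alpha^2-2A\alpha-AD$ is monotone increasing in $\mathrm{Tr}(\rho^2)$ (the coefficient $\alpha^2\ge 0$), and $\mathrm{Tr}(\rho^2)\le 1$ for any state, so the bound established at purity $1$ holds a fortiori for all $\rho$. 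Hence $\sigma=\Lambda_\alpha(\rho)$ satisfies the hypothesis of Lemma~\ref{Lemma:Multipartite} and is fully separable.

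The main obstacle is really just bookkeeping: verifying that the discriminant simplifies correctly to $A(D-1)(D-A)$ and that the root labelled $\alpha_-^*$ is indeed negative while $\alpha_+^*$ is positive (so the interval contains $\alpha=0$, consistent with $\sigma=\mathds{1}$ being separable), and checking the sign of $D-A-1$ to be sure the parabola opens upward so that the inequality holds \emph{between} the roots rather than outside them. A minor subtlety to flag is the degenerate case $D-A-1=0$, which only arises for $N=2$, $d$ small; there the quadratic degenerates to a linear inequality and the interval becomes a half-line, but since the statement is phrased for general $N$-qudit systems one should note this case separately or observe it does not occur for the dimensions of interest.
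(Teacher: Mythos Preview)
Your proposal is correct and follows essentially the same route as the paper: compute $\mathrm{Tr}(\sigma)$ and $\mathrm{Tr}(\sigma^2)$, impose the purity bound from Lemma~\ref{Lemma:Multipartite}, and solve the resulting quadratic in $\alpha$ at the extremal purity $\mathrm{Tr}(\rho^2)=1$. The only cosmetic difference is that the paper reduces to pure states up front by invoking convexity of the AS set and linearity of $\Lambda_\alpha$, whereas you keep $\mathrm{Tr}(\rho^2)$ generic and then argue monotonicity; both arguments are equivalent, and your version is in fact more explicit (the paper's proof is quite terse). One small correction: the degenerate case $D-A-1=0$ does not actually occur for any $N\ge 2$, since then $A\le 1$ and $D\ge 4$, so $D-A-1\ge 2$; your caveat is harmless but the case never arises in the multipartite setting.
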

\noindent For the specific case $d=2$ we have $A=\frac{\beta\cdot 2^{N}}{\beta + 3^{N}}$, where $\beta = 54/17$.

\begin{proof}
By convexity of the AS set and linearity of the map $\Lambda_{\alpha}$ and its inverse, it is sufficient to verify the statements for pure states, $\rho = \ketbra{\Psi}$, then $\Lambda (\ketbra{\Psi})=\mathds{1} + \alpha\ketbra{\Psi}=\sigma$, whose trace and purity reads $\mathrm{Tr}(\sigma)= D + \alpha, \mathrm{Tr}(\sigma^2)= D + \alpha(\alpha +2),$  $D = d^N$. By virtue of Lemma~\ref{Lemma:Multipartite}, one can bound $\alpha$ from $A$ by solving a quadratic equation. Specifically, Eq.~\eqref{eq:PurityA} is equivalent to the positivity of a polynomial in $\alpha$ of degree two, whose extremal values $\alpha$ are given by its two roots. 
\end{proof}

By construction, Theorem~\ref{thm:alphaSep} does not detect separability beyond Lemma~\ref{Lemma:Multipartite}. However, as opposed to the condition given in  Eq.~\eqref{eq:PurityA}, our criteria depend only on the minimal or maximal eigenvalue of the density matrix and not on its full spectrum. In the following, we specify the new condition.     

\begin{corollary}
    Let $\sigma$ be an $N-$qudit state with minimal eigenvalue $\lambda_{\min}(\sigma)$ and maximal eigenvalue $\lambda_{\max}(\sigma)$. If
\begin{equation}
    \lambda_{\min}(\sigma)\geq \frac{1}{d^N+\alpha_+}\quad \text{or}\quad \lambda_{\max}(\sigma)\leq \frac{1}{d^N+\alpha_-},
\label{eq:redcondmulti}
\end{equation}
where $\alpha_\pm$ are defined in Theorem~\ref{thm:alphaSep}, then $\sigma$ is absolutely fully separable.
\end{corollary}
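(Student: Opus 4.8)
The plan is to repeat, almost verbatim, the argument that proves Theorem~\ref{Prop:Lewenstein}, only replacing the interval $[-1,2]$ by the interval $[\alpha_{-}^{*},\alpha_{+}^{*}]$ furnished by Theorem~\ref{thm:alphaSep}. First I would record the consequence of Theorem~\ref{thm:alphaSep}: for every $\alpha\in[\alpha_{-}^{*},\alpha_{+}^{*}]$ the reduction-like map $\Lambda_\alpha$ sends the cone of PSD operators into the set of fully separable states. Feeding this into Theorem~\ref{thm:Primer} with $S$ the PSD cone and $S'$ the fully-separable set shows that, for such $\alpha$, positivity of $\Lambda_\alpha^{-1}(\sigma)$ certifies full separability of $\sigma$. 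Since $\Lambda_\alpha$ is unitarily equivariant and the spectrum of $\Lambda_\alpha^{-1}(\sigma)$ is a function of the spectrum of $\sigma$ alone, this certification propagates to the whole unitary orbit of $\sigma$, so $\sigma$ is in fact \emph{absolutely} fully separable.

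Next I would make the positivity condition explicit from Eq.~\eqref{Inversereduction} with $D=d^N$, i.e. $\Lambda_\alpha^{-1}(\sigma)=\frac{1}{\alpha}\bigl(\sigma-\frac{\mathrm{Tr}(\sigma)}{D+\alpha}\mathds{1}\bigr)$, and split into the two sign regimes. For $0<\alpha\leq\alpha_{+}^{*}$ positivity is equivalent to $\sigma\succeq\frac{\mathrm{Tr}(\sigma)}{D+\alpha}\mathds{1}$, i.e.\ $\lambda_{\min}(\sigma)\geq\frac{1}{D+\alpha}$ for normalized $\sigma$; for $\alpha_{-}^{*}\leq\alpha<0$ the inequality flips, giving $\lambda_{\max}(\sigma)\leq\frac{1}{D+\alpha}$. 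Because $\alpha\mapsto 1/(D+\alpha)$ is decreasing, the weakest (hence optimal) requirement in the first regime is obtained at the right endpoint $\alpha=\alpha_{+}^{*}$ and in the second at the left endpoint $\alpha=\alpha_{-}^{*}$; these extreme values yield exactly the two displayed bounds, with $\alpha_{+}\equiv\alpha_{+}^{*}$ and $\alpha_{-}\equiv-\alpha_{-}^{*}=|\alpha_{-}^{*}|$ so that $D+\alpha_{-}^{*}=D-\alpha_{-}$. Either inequality on its own certifies $\sigma$, which is why they appear disjunctively; the $\alpha=0$ point is excluded only because $\Lambda_0$ is not invertible and contributes no condition.

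The proof is essentially a one-line specialization of the bipartite one, so I do not expect a genuine obstacle; the only point deserving a moment's care is to confirm that $D+\alpha>0$ throughout $[\alpha_{-}^{*},\alpha_{+}^{*}]$ — equivalently that the denominator $D-\alpha_{-}$ of the $\lambda_{\max}$ bound stays positive — for all admissible values of $A$, including the sharpened qubit value $A=\beta\,2^{N}/(\beta+3^{N})$ with $\beta=54/17$. This follows from the explicit formula for $\alpha_{-}^{*}$ in Theorem~\ref{thm:alphaSep} (which behaves like $-\sqrt{A}$ while $D=d^{N}$ grows), together with a direct check of the few small-$N$ cases.
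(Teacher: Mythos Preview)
Your proposal is correct and follows essentially the same route as the paper, whose proof is literally the single line ``The proof is analogous to that of Theorem~\ref{Prop:Lewenstein}.'' You have simply spelled out that analogy in full, including the careful bookkeeping of the sign convention $\alpha_{-}=|\alpha_{-}^{*}|$ and the (harmless) extra check that $D+\alpha>0$ on the interval, neither of which the paper makes explicit.
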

\begin{proof}
The proof is analogous to that of Theorem~\ref{Prop:Lewenstein}.    
\end{proof}
Surprisingly, in contrast to its bipartite counterpart, the scalable condition  Eq.~\eqref{eq:PurityA}  is not the tightest of this form. Already for $N = 2$ it is weaker than Eq.~\eqref{eq:GurvitsBi}. In fact, to date, the tight value of $A$ remains unknown. However, in~\cite{hildebrand_entangled_2007}, by exploring constructions of separable states at the border of entanglement, the authors report the upper bound $\beta = 4$. Remarkably, such bound is also of order $\mathcal{O}(1)$ in the system’s size $N$, which implies that the gap between lower and upper bounds is closed up to a constant factor.\\

The existence of multipartite states fulfilling Theorem~\ref{Prop:Lewenstein} but remaining unconsidered by Lemma~\ref{Lemma:Multipartite} introduces the notion of a subtle form of entanglement: states that are AS and AP with respect to any bipartition and with respect to any bipartition of any of the reduced matrices, but yet not fully separable. In light of the intricacies of multipartite AS, in the next section we focus on the symmetric subspace, in which the characterization of these features is more manageable, as symmetric states are either fully entangled or fully separable.

\subsection{Symmetric absolute separability and absolute PPT}
The notions of AS and AP introduced earlier may be immediately restricted to the symmetric subspace, which we shall denote here as $S(\mathcal H)$. In a generic $N-$qudit system, $\mathcal{H} = (\mathbb{C}^d)^{\otimes N}$, this subspace has dimension $\binom{N+d-1}{d-1}$. 
Specifically, we define
\begin{definition}
\label{Def:SAS}
     A symmetric state $\rho$ acting on  $S(\mathcal H)$ is symmetric absolutely separable (SAS) if and only if, given any unitary matrix acting on the symmetric subspace $U_{S}$, $\rho'=U_{S} \rho_{S} U_{S}^{\dag}$ is fully separable.
\end{definition}
\begin{definition}
\label{Def:SAP}
     A symmetric state $\rho$ acting on  $S(\mathcal H)$ is symmetric absolutely PPT (SAP) if and only if, given any unitary matrix acting on the symmetric subspace $U_{S}$, $\rho'=U_{S} \rho_{S} U_{S}^{\dag}$ is PPT with respect to any bipartition.
\end{definition}

The symmetric subspace is of particular significance for the application of the techniques presented before to certify a given convex property from the spectrum of the state. In what follows, we provide sufficient SAS and SAP criteria by studying the tight bounds of $\alpha_{\pm}$ from the inverse map condition Eq.~\eqref{Inversereduction} and the convexity criteria presented in Lemma~\ref{Lemma:GeneralAlphaPM}. For this case, it is convenient to work in the lower-dimensional Dicke basis \cite{dicke_coherence_1954} directly with symmetric matrices $\rho_{S}$ around the symmetric projector $\mathds{1}_{S}.$\\

First, we will consider the symmetric cases of $2$- and $3$-qubit systems. Even though they are simple cases, we already provide a better analytical characterization for the latter case than what is known. Following the approach from Theorem~\ref{thm:Primer}, that is, considering the inverse of the reduction map \eqref{Inversereduction} in the symmetric subspace, the  $2$-qubit bounds on $\alpha\in[-3/4,1]$ that allow to certify SAS have been already derived in the general context of separability~\cite{lewenstein_linear_2022}. Then, the different SAS criteria can be reformulated as in the following Theorem~\ref{Obs:2QubitSAS}.

\begin{theorem}
\label{Obs:2QubitSAS}
 Let $\rho_{S}\in\mathcal{S}(\mathbb{C}^{2}\otimes\mathbb{C}^{2})$ be a symmetric state of a $2-$qubit system with $\lambda_{0}\leq\lambda_{1}\leq\lambda_{2}$ its corresponding eigenvalues in non-decreasing order. If 
 \begin{equation}
    \lambda_{0} \geq \frac{1}{4} \quad\text{or}\quad \lambda_{2} \leq \frac{4}{9},
\label{eq:red2cond}
\end{equation}
then $\rho$ is SAS.

Moreover, if
\begin{equation}
\label{eq:ASSymmetric2Qubits}
7\lambda_{0}+5\lambda_{1}\geq3,
\end{equation}
then $\rho$ is SAS.
\end{theorem}
\begin{proof}
The proof of the single eigenvalue expressions is analogous to that of Theorem~\ref{Prop:Lewenstein}, and uses the derivations on the inverse of the reduction map introduced by some of us in \cite{lewenstein_linear_2022}. The linear inequality follows from Lemma~\ref{Lemma:GeneralAlphaPM} by considering $\alpha_{+}=1$, $\alpha_{-}=-3/4$ and $D=3$.
\end{proof}
However, as appreciated in Figure~\ref{fig:SimetricComparison}, even though the simplexes are tight, the CH of both of them does not attain the criterion $\sqrt{\lambda_{1}}+\sqrt{\lambda_{0}}\geq 1$ (see~\cite{champagne_spectral_2022, serrano-ensastiga_maximum_2023}), which fully characterizes SAS in the simple symmetric $2-$qubit scenario. For the case of a symmetric $3-$qubit system, we managed to provide the actual tight bounds of $\alpha_{\pm}$, which had also been obtained numerically in~\cite{serrano-ensastiga_maximum_2023}.
\begin{figure}[htbp!]
    \centering
    \begin{subfigure}[b]{0.38\textwidth}
     \raisebox{4ex}{\includegraphics[width=\textwidth]{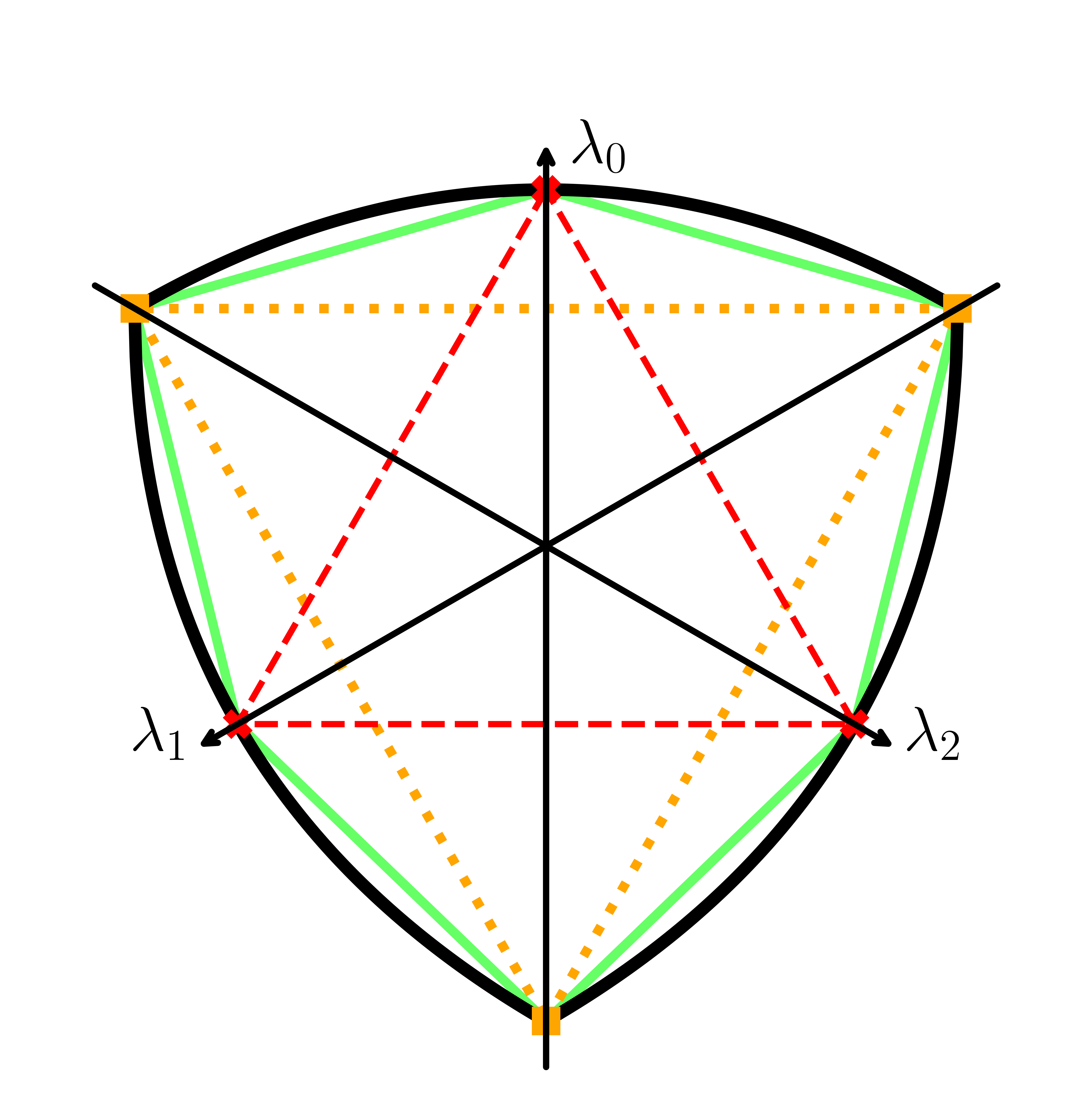}}
\end{subfigure}
\hspace{0.1\textwidth}
    \begin{subfigure}[b]{0.27\textwidth}
\includegraphics[width=\textwidth]{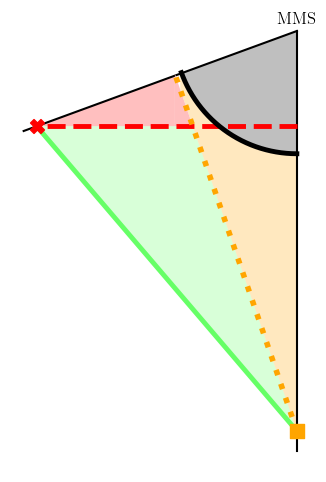}
\end{subfigure}
\vspace{-2ex}
    \caption{\textbf{Left panel:} Comparison of the states detected for a symmetric $2-$qubit system with the convex hull (green line) of the sufficient conditions through the inverse of the linear map with $\alpha=1$  (dashed red line) and $\alpha=-3/4$ (dotted orange line) with the complete characterization $\sqrt{\lambda_{0}}+\sqrt{\lambda_{1}}\geq 1$  (black solid line). \textbf{Right panel:} Comparison of the states detected for a symmetric $3-$qubit system with the convex hull (green line) of the sufficient conditions through the inverse of the linear map with $\alpha=2/3$  (dashed red line) and $\alpha=-2/3$ (dotted orange line) with the best known  analytical criterion \cite{serrano-ensastiga_absolute-separability_2024} (black sphere). The representation shows a $2D$ cut of the non-increasingly ordered region of eigenvalues $\lambda_{0}\leq\lambda_{1}\leq\lambda_{2}\leq\lambda_{3}$.} 
    \label{fig:SimetricComparison}
\end{figure}

\begin{theorem}

\label{Obs:3QubitsSAS}
 Let $\rho_{S}\in\mathcal{S}(\mathbb{C}^{2}\otimes\mathbb{C}^{2}\otimes\mathbb{C}^{2})$ be a symmetric state of a $3-$qubit system and $\lambda_{0}\leq\lambda_{1}\leq\lambda_{2}\leq\lambda_{3}$ its corresponding eigenvalues in non-decreasing order. If
 \begin{equation}
      \lambda_{0} \geq \frac{3}{10} \quad\text{or}\quad  \lambda_{3} \leq \frac{3}{14},
      \label{eq:l0l3}
 \end{equation}
 then $\rho$ is AS. 
 
Moreover, if 
\begin{equation}
    \label{eq:SAS3QUBITS}
6\lambda_{0}+6\lambda_{1}+2\lambda_{2}\geq 3,
\end{equation}
then $\rho$ is SAS.
\end{theorem}

\begin{proof}
Eq.~\eqref{eq:l0l3} is analogous to the ones in Theorem~\ref{Prop:Lewenstein}. For the positive bound of $\alpha=2/3$ see  Proposition~\ref{thm:NquditsShort} , and for $\alpha_{-}=-2/3$ see Theorem~\ref{thm:3quditsShort}. The proof of Eq.~\eqref{eq:SAS3QUBITS} follows from Lemma~\ref{Lemma:GeneralAlphaPM} by considering the announced $\alpha_{\pm}$ and $D=4$.
\end{proof}

Interestingly, for $N=3$ symmetric qubits, some SAS states, found numerically, lie outside the region given by Eq.~\eqref{eq:SAS3QUBITS}, as recently reported in~\cite{serrano-ensastiga_maximum_2023}.  Nevertheless, our analytical linear expression includes all known, to date, analytical criteria \cite{serrano-ensastiga_absolute-separability_2024, bohnet-waldraff_absolutely_2017}. In Figure~\ref{fig:SimetricComparison}, we display the known conditions that detect SAS states, together with those derived in this work. As it can be seen, the volume of SAS states detected with our conditions represents a notable improvement. \\

For higher dimensional cases, we will focus on the SAP (symmetric absolute PPT) set of states.  Let us recall that in the symmetric subspaces $S((\mathbb C^2)^{\otimes N})$ for $N=2,3$, PPT is a sufficient condition for separability, and thus SAS is equivalent to SAP in such cases. In what follows, we derive new bounds to certify SAP for the general $N-$qudit symmetric spaces with respect to any bipartitation. Very recently, some counterexamples for the presumed equivalence between the SAS and SAP sets of states have been found numerically for  $N=5,7,9$ in qubit systems \cite{louvet_equivalence_2024}. Nevertheless, the SAP set of states has only been characterized for the specific case of bipartite qudit systems, i.e., $N=2$, via a set of LMI on the eigenvalues of such systems \cite{champagne_spectral_2022}. However, the number of such inequalities grows with the local dimension $d$ as $(d(d+1)/2)!$ becoming intractable already for $d=5$. Here, we present new conditions valid for arbitrary dimensions and number of parties $d, N$, which are much more practical. 

Regarding SAP, it is also necessary to address that absolute positive partial transposition in multipartite systems is defined w.r.t.\ all the bipartitions of the state at once. For symmetric multipartite states, it can be seen (c.f.  Proposition~\ref{thm:NquditsShort} ) that the sets of SAP fulfill the inclusions $\text{SAP}_{\lfloor N/2 \rfloor: N - \lfloor N/2 \rfloor}\in \text{SAP}_{\lfloor N/2 \rfloor-1: N - \lfloor N/2 \rfloor+1}\in \cdots \in \text{SAP}_{2:N-2} \in\text{SAP}_{1:N-1}$. Thus, it is necessary to consider SAP w.r.t.\ the biggest (or most balanced) bipartition , otherwise it is always possible to find a state that is PPT under the action of any symmetric unitary in the biggest bipartition and NPT for some symmetric unitary in the smaller one. To this aim, in the following  Proposition~\ref{thm:NquditsShort} , we derive analytical bounds for $\alpha_{\pm}$ to certify SAP in $N$-qudit systems.

\begin{proposition}
\label{thm:NquditsShort} 
Let $\rho_{S}\in\mathcal{S}((\mathbb{C}^{d})^{\otimes N})$ be a symmetric state of an $N$-qudit system with $\{\lambda_{k}\leq\lambda_{k+1}\}_{k=0,\cdots, D_{S}-1}$ its corresponding eigenvalues in non-decreasing order, with $D_{S}=\binom{N+d-1}{d-1}$. If at least one of the following inequalities holds:
\begin{equation}
    \label{eq:CoroNQudits}
    \lambda_{0}\geq \frac{1}{D_{S}+2\cdot\binom{N}{\lfloor N/2 \rfloor}^{-1}}\quad\text{or}\quad \lambda_{D_{S}-1}\leq \frac{1}{D_{S}-\binom{N}{\lfloor N/2 \rfloor}^{-1}},
\end{equation}
then $\rho_{S}$ is symmetric absolutely PPT.

Moreover, if
\begin{equation}
    \label{eq:SAP}
3\binom{N}{\lfloor N/2 \rfloor}\sum_{i=0}^{\left\lceil \frac{D_{S}-2}{3} \right\rceil-1}\lambda_{i}    +\left[\binom{N}{\lfloor N/2 \rfloor}\left(D_{S}-3\left\lceil \frac{D_{S}-2}{3} \right\rceil\right)+2  \right]\lambda_{\left\lceil \frac{D_{S}-2}{3} \right\rceil}\geq\binom{N}{\lfloor N/2 \rfloor}
\end{equation}
then $\rho_{S}$ is symmetric absolutely PPT.
\end{proposition}

\begin{proof}
On the one hand, Eq.~\eqref{eq:CoroNQudits} is analogous to the ones in Theorem~\ref{Prop:Lewenstein}. It is necessary to compute the bounds of $\alpha_{\pm}$ to certify SAP following Theorem~\ref{thm:Primer}. We need to prove that $\Lambda_{\alpha}(S)=\Tr(\rho_{S})\mathds{1}_{S}+\alpha\rho_{S}\in S'$, where $S'$ is the set of PPT symmetric states. We define $k$ as the number of qudits included in the partition $A$ that is transposed, and assume that $\Tr(\rho_{S})=1$. Now, we use the linearity of the partial transpose to state that $(\Lambda_{\alpha}(\rho_{S}))^{T_{k}}=(\mathds{1}_{S})^{T_{k}}+\alpha\rho_{S}^{T_{k}}\geq 0$. Since the eigenvalues of a transposed positive semidefinite density matrix are constrained between $[-1/2,1]$ \cite{zyczkowski_volume_1998, rana_negative_2013}, the conditions become $\lambda_{min}(\mathds{1}_{S}^{T_{k}}) -\alpha_{+}/2\geq0\rightarrow \alpha_{+}\leq2\cdot\lambda_{min}(\mathds{1}_{S}^{T_{k}})$  and $\lambda_{min}(\mathds{1}_{S}^{T_{k}})-|\alpha_{-}|\geq 0\rightarrow \alpha_{-}\geq -\lambda_{min}(\mathds{1}_{S}^{T_{k}})$. Moreover, the span of any transpose matrix of the symmetric subspace of $N-$qudits is contained in $\mathcal{S}((\mathbb{C}^{d})^{\otimes k})\otimes\mathcal{S}((\mathbb{C}^{d})^{\otimes N-k})$.  From here, it is sufficient to consider first that the biggest partition given by $k=\lfloor N/2 \rfloor$ includes the remaining ones. Second, the minimal eigenvalue of $\mathds{1}_{S}^{T_{k}}$ is given by $\binom{N}{k}^{-1}$, independently of $d$. This value can be analytically proven in a case by case basis for small values of $N$. Generically, for arbitrary $d,N$, the same value is recovered numerically by expressing the partial transpose in the bipartite symmetric basis $\mathcal{S}((\mathbb{C}^{d})^{\otimes k})\otimes\mathcal{S}((\mathbb{C}^{d})^{\otimes N-k})$.

On the other hand, the proof of Eq.~\eqref{eq:SAP} follows from Lemma~\ref{Lemma:GeneralAlphaPM} by considering the anounced $\alpha_{+}=2\cdot\binom{N}{\lfloor N/2 \rfloor}^{-1}$, $\alpha_{-}=-\binom{N}{\lfloor N/2 \rfloor}^{-1}$ and $D=D_{S}=\binom{N+d-1}{d-1}$. We also considered the simplification $c=\left\lceil \frac{1}{3} \left( -3 + 2\binom{N}{\lfloor N/2 \rfloor}^{-1} +D_{S} \right) \right\rceil = \left\lceil \frac{D_{S}-2}{3} \right\rceil$. 
\end{proof}

An interesting remark is that we have numerical evidence that the upper bound of $\alpha_{+}=2\cdot\binom{N}{\lfloor N/2 \rfloor}^{-1}$ is tight for the SAP set, since increasing such value by even a $0.1\%$ leads to NPT states under certain unitary transformations. Nevertheless, the same behavior has not been found for the negative lower bound on $\alpha_{-}=\binom{N}{\lfloor N/2 \rfloor}^{-1}$, we foresee a dependence on $d$ for $\alpha_{-}$. Thus, the states that lead to a partial transpose with maximal eigenvalue $1$ do not fall in the span of the eigenvectors of $\mathds{1}_{S}$ with minimal eigenvalue \cite{zyczkowski_volume_1998, rana_negative_2013}. As an example, one can check that for $N=2,3$ qubits, the lowest bounds are given by $\alpha_{-}=-3/4, -2/3$ respectively, instead of the values $\alpha_{-}=-1/2,-1/3$ given by  Proposition~\ref{thm:NquditsShort} . For a given $N$, the negative bound becomes tighter as $d\rightarrow \infty$. Nevertheless, in a case by case basis, it is possible to study the structure of the
\begin{equation}
\label{eq:SAPMAP}
\Lambda_{\alpha}(\rho_{S})^{T_{\lfloor N/2 \rfloor:N-\lfloor N/2 \rfloor}}=\mathds{1}_{S}^{T_{\lfloor N/2 \rfloor:N-\lfloor N/2 \rfloor}}+\alpha \rho_{S}^{T_{\lfloor N/2 \rfloor: N -\lfloor N/2 \rfloor}}
\end{equation}
matrix to ensure positivity and derive better bounds of $\alpha_{-}$ as we present next. \\

\begin{theorem}
\label{thm:2quditsShort}
Let $\rho_{S}\in\mathcal{S}(\mathbb{C}^{d}\otimes\mathbb{C}^{d})$ be a symmetric state of a $2$-qudit system with $\{\lambda_{k}\leq\lambda_{k+1}\}_{k=0,\cdots, D_{S}-1}$ its corresponding eigenvalues in non-decreasing order, with $D_{S}=d\cdot(d+1)/2$. If at least one of the following inequalities holds:
\begin{equation}
    \label{eq:Coro2Qudits}
    \lambda_{0}\geq \frac{1}{D_{S}+1}\quad\text{or}\quad \lambda_{D_{S}-1}\leq \frac{1}{D_{S}-\frac{d+1}{2d}},
\end{equation}
then $\rho_{S}$ is symmetric absolutely PPT.

Moreover, if
\begin{equation}
    \label{eq:Coro2Qudits2}
     \frac{3d+1}{d+1}\cdot\sum_{i=0}^{c-1}\lambda_{i}+\left[\frac{d(d+1)}{2}-\frac{3d+1}{d+1}\cdot c +1 \right]\cdot\lambda_{c}\geq 1
\end{equation}
where  $c = \left\lceil \frac{d\cdot(d+3)\cdot(d-1)}{2\cdot(3d+1)} \right\rceil$, then $\rho_{S}$ is symmetric absolutely PPT. 
\end{theorem}
\begin{proof}
Eq.~\eqref{eq:CoroNQudits} is derived in complete analogy with Theorem~\ref{Prop:Lewenstein}. All we need is to find the range of $\alpha$ such that for any bipartite symmetric state $\rho_{\cal S}\geq 0$, $\sigma = \Lambda_\alpha(\rho) = \mathds{1}_{S} + \alpha\rho_{\cal S}$ is PPT. By convexity and covariance of the map, it is sufficient to consider pure states $\rho = \ketbra{\Psi}$ in the Schmidt form $\ket{\Psi} = \sum_{i=0}^{d-1}x_i\ket{i}_A\ket{i}_B$, $\{0\leq x_i\leq x_{i+1} \}$. Then by computing its partial transpose, we obtain\footnote{The following matrix is expressed with respect to the basis $\bigoplus_{i>j}\{\ket{ij},\ket{ji}\}\oplus \{\ket{kk}, k = 0,1,..,d-1\} $.}
\begin{equation}
\label{eq:Emofro2qudits}
\sigma^{T_A}= \bigoplus_{i>j}
\begin{pmatrix}
\frac{1}{2} & \alpha x_{i} x_{j} \\
\alpha x_{i} x_{j} & \frac{1}{2}
\end{pmatrix}\oplus\begin{pmatrix}
1 + \alpha x_{0}^{2} & \frac{1}{2} & \cdots & \frac{1}{2} \\
\frac{1}{2} & \ddots &  & \frac{1}{2} \\
\vdots &  & \ddots & \vdots \\
\frac{1}{2} & \frac{1}{2} & \cdots & 1 + \alpha x_{d-1}^{2}
\end{pmatrix}.
\end{equation}
Positivity of the first $d(d-1)/2$ blocks implies $1/2 - |\alpha|x_{d-1}x_{d-2}\geq 0$. From  $x_{d-1}x_{d-2}\leq 1/2$, it yields the range $-1\leq \alpha\leq 1$. On the other hand, by convexity, it is sufficient to set $x_{d-2} = 0, x_{d-1} = 1$ to check the positivity of the last $d\times d$ block. For $\alpha\geq 0$, the block is PSD. The minimal value of $\alpha$, $\alpha_{\min}<0$ such that the block is PSD will lead to a rank-deficient matrix. This condition is met whenever the last row $(1/2,...,1/2,1+\alpha):=\mathbf{w}$ is orthogonal to the orthogonal complement of the renaming $d-1$ rows: $(-1/d,...,-1/d, 1):=\mathbf{v}$. The equality $\mathbf{v}\cdot\mathbf{w} = 0$ is fulfilled by $\alpha_{\min} = -(d+1)/(2d)$. Thus, for $\alpha\geq -(d+1)/(2d)$, the last block is PSD as well. 

\noindent From the previous considerations, we conclude $\alpha_- = -(d+1)/(2d)$ and $\alpha_+ = 1$. Finally, Eq.~\eqref{eq:Emofro2qudits} follows from Lemma~\ref{Lemma:GeneralAlphaPM}  with the announced bounds on $\alpha$. 
\end{proof}

\begin{theorem}
\label{thm:3quditsShort}
Let $\rho_{S}\in\mathcal{S}(\mathbb{C}^{d}\otimes\mathbb{C}^{d}\otimes\mathbb{C}^{d})$ be a symmetric state of a $3$-qudit system with $\{\lambda_{k}\leq\lambda_{k+1}\}_{k=0,\cdots, D_{S}-1}$ its corresponding eigenvalues in non-decreasing order, with $D_{S}=d\cdot(d+1)\cdot(d+2)/6$. If at least one of the following inequalities holds:
\begin{equation}
    \label{eq:Coro3Qudits}
    \lambda_{0}\geq \frac{1}{D_{S} +\frac{2}{3} }\quad\text{or}\quad \lambda_{D_{S}-1}\leq \frac{1}{D_{S}-\frac{d+2}{3d}},
\end{equation}
then $\rho_{S}$ is symmetric absolutely PPT.

Moreover, if
\begin{equation}
    \label{eq:Coro3Qudits2}
\frac{3d+2}{d+2}\sum_{i=0}^{c-1}\lambda_{i}+\left[D_{S} -\frac{3d+2}{d+2}\cdot c +\frac{2}{3} \right]\cdot\lambda_{c}\geq 0
\end{equation}

where  $c = \left\lceil  \frac{3\cdot D_{S}\cdot (2 + d)-7d-2}{3\cdot(3 d+2)}  \right\rceil$, then $\rho_{S}$ is symmetric absolutely PPT. 
\end{theorem}
\begin{proof}
The proof is analogous to the one of Theorem~\ref{thm:2quditsShort}. It is sufficient to consider a pure state of the form $\ket{\Psi}=\sum_{i=0}^{d-1}x_{i}\ket{i}_A\ket{ii}_B$, $\{0\leq x_i\leq x_{i+1} \}$, and evaluate the range of $\alpha$ such that the image of the reduction map $\sigma = \Lambda_\alpha(\ketbra{\Psi}) = \mathds{1}_{S} + \alpha\ketbra{\Psi}$ is PPT. The partial transpose of $\sigma$ admits a block-diagonal form as per\footnote{The following matrix is expressed with respect to the basis $\bigoplus_{i\neq j}\{\ket{ijj},\ket{jii}\}\bigoplus_k \{\ket{kkk},\{\ket{llk}_{l=0,...,d-1; l\neq k},\ket{l'kl'}_{l=d-1,...,0; l'\neq k} \} \} \} \bigoplus_{p\neq q\neq r} \{\ket{pqr},\ket{rpq} \}$. } 
\begin{equation}
\label{eq:Emofro3qudits}
\sigma^{T_A}= \bigoplus_{i\neq j}
\begin{pmatrix}
\frac{1}{3} & \alpha x_{i} x_{j} \\
\alpha x_{i} x_{j} & \frac{1}{3}
\end{pmatrix} 
\bigoplus_{k}
\underbrace{
\begin{pmatrix}
1+\alpha x_{k}^{2} & \frac{1}{3} &\cdots & \cdots  & \cdots&\frac{1}{3} \\
\frac{1}{3} & \frac{1}{3} & \frac{1}{6} & \cdots & \frac{1}{6} & \frac{1}{3} \\
\vdots & \frac{1}{6} & \frac{1}{3} & \cdots  & \frac{1}{3} & \frac{1}{6} \\
 \vdots & \vdots & & \ddots & & \vdots \\
 \vdots & \frac{1}{6} & \frac{1}{3} & \cdots &\frac{1}{3} & \frac{1}{6} \\
 \frac{1}{3} & \frac{1}{3} & \frac{1}{6} & \cdots & \frac{1}{6} & \frac{1}{3}
\end{pmatrix}}_{:=M_k}\bigoplus_{p\neq q\neq r} \begin{pmatrix}
    1/6 & 1/6 \\
      1/6 & 1/6 
\end{pmatrix}.
\end{equation}

Positivity of the first $d(d-1)/2$ blocks implies $1/3 - |\alpha|x_{d-1}x_{d-2}\geq 0$. From  $x_{d-1}x_{d-2}\leq 1/2$, it yields the range $-2/3\leq \alpha\leq 2/3$. On the other hand, for the next $d$ blocks, it is sufficient to check the positivity of $M_{d-1}$ with $x_{d-2} = 0, x_{d-1} = 1$. For $\alpha\geq 0$, $M_{d-1}\geq  0$. The minimal value of $\alpha$, $\alpha_{\min}<0$ such that $M_{d-1}\geq 0 $ will decrease its rank. For this condition, we can construct a $d$-dimensional set of linearly independent vectors as the orthogonal complement of the matrix $M_{d-1}$ once the first row has been deleted. The first $(d-2)$ ones are of the form $(0, ..., 0, 1, 0, ..., 0, -1, 0, ..., 0)$ where the $1$ and $-1$ are set in the positions $1+\kappa$ and $-\kappa$ of the vector respectively, with $\kappa = 1,2,..,d-2$. The ones restraining the  value  of alpha are of the form $(-d/2,1,...,1,0,...,0):=\mathbf{v}_{1}$ and $(-d/2,1,...,1,0,1,0,...,0):=\mathbf{v}_{2}$, both of them containing $d/2$ ones. It can be seen trivially that all the vectors of the first form are orthogonal to the first row $(1+\alpha, 1/3, ..., 1/3):=\mathbf{w}$ of $M$. The equality $\mathbf{v}_{i}\cdot\mathbf{w} = 0$ is fulfilled by $\alpha_{\min} = -(d+2)/(3d)$. Thus, for $\alpha\geq -(d+2)/(3d)$, the last block is PSD as well. 

\noindent From the previous considerations, we conclude $\alpha_{-} = -(d+2)/(3d)$ and $\alpha_+ = 2/3$. Finally, Eq.~\eqref{eq:Coro3Qudits2} follows from Lemma~\ref{Lemma:GeneralAlphaPM}  with the announced bounds on $\alpha$. 
\end{proof}

Remarkably, we present analytical tight bounds of the parameter $\alpha_{\pm}$ for arbitrary dimensions $d$ and $N=2,3$.  Contrarily to the bounds for $\alpha_{-}$ in  Proposition~\ref{thm:NquditsShort} , as expected from the analytical derivations presented in Theorems~\ref{thm:2quditsShort} and \ref{thm:3quditsShort}, it is possible to find NPT states arbitrarily close to the given extreme values of $\alpha_{-}$.  In general, these bounds are algebraic numbers defined by the roots of characteristic polynomials of high degree. For $N>3$, the structure of the matrices is more involved (see e.g. recent contribution~\cite{romero-palleja_multipartite_2024}). The systematic extension of our results to arbitrary number of partitions $N$ warrants further investigation and it will be examined elsewhere.

\section{Conclusions}
\label{sec:conclus}
 In this work we have presented a new method to tackle the problem of absolute separability and absolute PPT based on inverting non-completely positive maps. The AS (AP) sets are defined as those formed by the states whose separability (positivity under partial transposition) is preserved under any global unitary transformation. These sets are convex and contain the maximally mixed state as well as states sufficiently close to it. Due to this invariant condition, AS and AP states are completely determined from their spectrum. 

Our method provides novel simple analytical \textit{sufficient} conditions for AS in arbitrary dimensions based on the knowledge of very few or even one eigenvalue of the spectrum. These analytical conditions correspond to convex simplexes that contain extremal points of the AS set, improving presently known conditions for AS. Moreover, we have demonstrated that AS conditions can be cast as convex functions of the form $f(\rho)\leq 0$. This fact allows us to unify all known AS conditions into a single and stronger one using convex programming. Such a technique can be generalized to any convex set to improve its characterization, which is normally a very hard task. Our method also yields new conditions on the problem of absolute separability in the multipartite scenario, as well as in the case of the symmetric subspace. Finally, from the same methodology, we provide various new criteria on the twin problem of AP for multipartite symmetric systems. 

While some protocols for teleportation \cite{garg_teleportation_2021} or error correction \cite{koczor_dominant_2021}  aim at the evaluation of the maximal eigenvalue of the state, our criteria mostly rely on the knowledge of the lowest one, motivating the study of experimental setups that lower-bound the spectrum of a given state. Moreover, our results of absolute separability in the multipartite scenario highlights the present lack of results in characterizing the entanglement properties of highly mixed systems, prompting deeper studies on the topic.

Finally, it is worth noting that similar techniques as the one we report here can be used to study the invariance of other convex sets of states under global unitary transformations. For instance, our method can be applied to further analyze absolute non-violation of steering \cite{ss_bhattacharya_absolute_2018} or Bell-CHSH \cite{ganguly_bell-chsh_2018} inequalities. It can also be used to introduce and characterize other invariant convex sets, such as absolute Schmidt Number \cite{abellanet2025private}. Lastly, the possibility of exploiting these absolute sets in a resource theory remains an open question.\\

\vspace{0.3cm}
\noindent \textbf{Acknowledgements}.
-- We thank A. Diebra and J. Romero-Pallejà for insightful comments and  acknowledge  fruitful discussions with E. Serrrano-Ensátiga, J. Ahiable, K.N.B. Teja, J. Tura and A. Winter.

J.A.-V. acknowledges financial support from Ministerio de Ciencia e Innovación of the Spanish Goverment FPU23/02761. We acknowledge financial support from Spanish MICIN (projects: PID2022: 141283NB100; 139099NBI00) with the support of FEDER funds, the Spanish Goverment with funding from European Union NextGenerationEU (PRTR-C17.I1),
the Generalitat de Catalunya, the Ministry for Digital Transformation and of Civil Service of the Spanish Government through the QUANTUM ENIA project -Quantum Spain Project- through the Recovery, Transformation and Resilience Plan NextGeneration EU within the framework of the Digital Spain 2026 Agenda. 
 We also acknowledge support from: Europea Research Council AdG NOQIA; MCIN/AEI (PGC2018-0910.13039/501100011033, CEX2019-000910-S/10.13039/501100011033), EU-QUANTera Projects: MAQS PCI2019-111828-2, DYNAMITE PCI2022-132919, ExTRaQT-PCI2022-132965.
 Support from Fundació Cellex; Fundació Mir-Puig; Generalitat de Catalunya (European Social Fund FEDER and CERCA program, AGAUR Grant No. 2021 SGR 01452, QuantumCAT \ U16-011424, co-funded by ERDF Operational Program of Catalonia 2014-2020); Barcelona Supercomputing Center MareNostrum (FI-2023-3-0024) funded by the European Union are acknowledged. 
 This project has received funding from EU Horizon Europe Program ORIZON-CL4-2022-QUANTUM-02-SGA PASQuanS2.1, 101113690; FET-OPEN OPTOlogic, Grant No 899794, NeQSTGrant-01080086. ICFO Internal “QuantumGaudi" project.
 Views and opinions expressed are, however, those of the authors only and do not necessarily reflect those of the European Union or the European Research Council. Neither the European Union nor the granting authority can be held responsible for them.
 G. R.-M. acknowledges funding from the European Innovation Council accelerator grant COMFTQUA, no. 190183782.
 
\bibliographystyle{quantum}
\bibliography{BiblioTFMQ}

\begin{thebibliography}{10}

\bibitem{peres_separability_1996}
A.~Peres.
\newblock ``Separability {Criterion} for {Density} {Matrices}''.
\newblock \href{https://dx.doi.org/10.1103/PhysRevLett.77.1413}{Physical Review Letters {\bf 77}, 1413--1415}~(1996).

\bibitem{horodecki_separability_1996}
M.~Horodecki, P.~Horodecki, and R.~Horodecki.
\newblock ``Separability of mixed states: necessary and sufficient conditions''.
\newblock \href{https://dx.doi.org/10.1016/S0375-9601(96)00706-2}{Physics Letters A {\bf 223}, 1--8}~(1996).

\bibitem{gharibian_strong_2010}
S.~Gharibian.
\newblock ``Strong {NP}-hardness of the quantum separability problem''.
\newblock \href{https://dx.doi.org/10.26421/QIC10.3-4-11}{Quantum Information and Computation {\bf 10}, 343--360}~(2010).

\bibitem{gurvits_classical_2003}
L.~Gurvits.
\newblock ``Classical deterministic complexity of edmonds' problem and quantum entanglement''.
\newblock In Proceedings of the thirty-fifth annual {ACM} symposium on Theory of computing.
\newblock \href{https://dx.doi.org/10.1145/780542.780545}{Pages 10--19}.
\newblock {STOC} '03. Association for Computing Machinery~(2003).

\bibitem{skowronek_there_2016}
Ł. Skowronek.
\newblock ``There is no direct generalization of positive partial transpose criterion to the three-by-three case''.
\newblock \href{https://dx.doi.org/10.1063/1.4966984}{Journal of Mathematical Physics {\bf 57}, 112201}~(2016).

\bibitem{doherty_complete_2004}
A.C. Doherty, P.A. Parrilo, and F.M. Spedalieri.
\newblock ``Complete family of separability criteria''.
\newblock \href{https://dx.doi.org/10.1103/PhysRevA.69.022308}{Physical Review A {\bf 69}, 022308}~(2004).

\bibitem{guhne_entanglement_2009}
O.~Gühne and G.~Tóth.
\newblock ``Entanglement detection''.
\newblock \href{https://dx.doi.org/10.1016/j.physrep.2009.02.004}{Physics Reports {\bf 474}, 1--75}~(2009).

\bibitem{horodecki_five_2022}
P.~Horodecki, L.~Rudnicki, and K.~Życzkowski.
\newblock ``Five open problems in quantum information theory''.
\newblock \href{https://dx.doi.org/10.1103/PRXQuantum.3.010101}{PRX Quantum {\bf 3}, 010101}~(2022).

\bibitem{knill_iqoqi_2013}
E.~Knill.
\newblock ``{IQOQI} - {Open} {Quantum} {Problems}''~(2013).
\newblock \href{https://oqp.iqoqi.oeaw.ac.at/open-quantum-problems}{url}.

\bibitem{lewenstein_sufficient_2016}
M.~Lewenstein, R.~Augusiak, D.~Chruściński, S.~Rana, and J.~Samsonowicz.
\newblock ``Sufficient separability criteria and linear maps''.
\newblock \href{https://dx.doi.org/10.1103/PhysRevA.93.042335}{Physical Review A {\bf 93}, 042335}~(2016).

\bibitem{filippov_absolutely_2017}
S.N. Filippov, K.Y. Magadov, and M.A. Jivulescu.
\newblock ``Absolutely separating quantum maps and channels''.
\newblock \href{https://dx.doi.org/10.1088/1367-2630/aa7e06}{New Journal of Physics {\bf 19}, 083010}~(2017).

\bibitem{lewenstein_linear_2022}
M.~Lewenstein, G.~Müller-Rigat, J.~Tura, and A.~Sanpera.
\newblock ``Linear maps as sufficient criteria for entanglement depth and compatibility in many-body systems''.
\newblock \href{https://dx.doi.org/10.1142/S1230161222500111}{Open Systems \& Information Dynamics {\bf 29}, 2250011}~(2022).

\bibitem{kus_geometry_2001}
M.~Kus and K.~Zyczkowski.
\newblock ``Geometry of entangled states''.
\newblock \href{https://dx.doi.org/10.1103/PhysRevA.63.032307}{Physical Review A {\bf 63}, 032307}~(2001).

\bibitem{chitambar_quantum_2019}
E.~Chitambar and G.~Gour.
\newblock ``Quantum resource theories''.
\newblock \href{https://dx.doi.org/10.1103/RevModPhys.91.025001}{Reviews of Modern Physics {\bf 91}, 025001}~(2019).

\bibitem{patra_resource_2023}
A.~Patra, A.~Maity, and A.~Sen~(De).
\newblock ``Resource {Theory} of {Non}-absolute {Separability}''.
\newblock \href{https://dx.doi.org/10.1103/PhysRevA.108.042402}{Physical Review A {\bf 108}, 042402}~(2023).

\bibitem{halder_characterizing_2021}
S.~Halder, S.~Mal, and A.~Sen(De).
\newblock ``Characterizing the boundary of the set of absolutely separable states and their generation via noisy environments''.
\newblock \href{https://dx.doi.org/10.1103/PhysRevA.103.052431}{Physical Review A {\bf 103}, 052431}~(2021).

\bibitem{ekert_direct_2002}
A.K. Ekert, C.M. Alves, D.K.L. Oi, M.~Horodecki, P.~Horodecki, and L.~C. Kwek.
\newblock ``Direct {Estimations} of {Linear} and {Nonlinear} {Functionals} of a {Quantum} {State}''.
\newblock \href{https://dx.doi.org/10.1103/PhysRevLett.88.217901}{Physical Review Letters {\bf 88}, 217901}~(2002).

\bibitem{tanaka_determining_2014}
T.~Tanaka, Y.~Ota, M.~Kanazawa, G.~Kimura, H.~Nakazato, and F.~Nori.
\newblock ``Determining eigenvalues of a density matrix with minimal information in a single experimental setting''.
\newblock \href{https://dx.doi.org/10.1103/PhysRevA.89.012117}{Physical Review A {\bf 89}, 012117}~(2014).

\bibitem{gurvits_largest_2002}
L.~Gurvits and H.~Barnum.
\newblock ``Largest separable balls around the maximally mixed bipartite quantum state''.
\newblock \href{https://dx.doi.org/https://doi.org/10.1103/PhysRevA.66.062311}{Physical Review A {\bf 66}, 062311}~(2002).

\bibitem{hildebrand_entangled_2007}
R.~Hildebrand.
\newblock ``Entangled states close to the maximally mixed state''.
\newblock \href{https://dx.doi.org/10.1103/PhysRevA.75.062330}{Physical Review A {\bf 75}, 062330}~(2007).

\bibitem{hildebrand_ppt_2007}
R.~Hildebrand.
\newblock ``{PPT} from spectra''.
\newblock \href{https://dx.doi.org/10.1103/PhysRevA.76.052325}{Physical Review A {\bf 76}, 052325}~(2007).

\bibitem{champagne_spectral_2022}
G.~Champagne, N.~Johnston, M.~MacDonald, and Logan Pipes.
\newblock ``Spectral {Properties} of {Symmetric} {Quantum} {States} and {Symmetric} {Entanglement} {Witnesses}''.
\newblock \href{https://dx.doi.org/10.1016/j.laa.2022.05.004}{Linear Algebra and its Applications {\bf 649}, 273--300}~(2022).

\bibitem{verstraete_maximally_2001}
F.~Verstraete, K.~Audenaert, and B.~De~Moor.
\newblock ``Maximally entangled mixed states of two qubits''.
\newblock \href{https://dx.doi.org/10.1103/PhysRevA.64.012316}{Physical Review A {\bf 64}, 012316}~(2001).

\bibitem{johnston_separability_2013}
N.~Johnston.
\newblock ``Separability from {Spectrum} for {Qubit}-{Qudit} {States}''.
\newblock \href{https://dx.doi.org/https://doi.org/10.1103/PhysRevA.88.062330}{Physical Review A {\bf 88}, 062330}~(2013).

\bibitem{serrano-ensastiga_absolute-separability_2024}
E.~Serrano-Ensástiga, J.~Denis, and J.~Martin.
\newblock ``Absolute-separability witnesses for symmetric multiqubit states''.
\newblock \href{https://dx.doi.org/10.1103/PhysRevA.109.022430}{Physical Review A {\bf 109}, 022430}~(2024).

\bibitem{patra_efficient_2021}
A.~Patra, S.~Mal, and A.~Sen(De).
\newblock ``Efficient nonlinear witnessing of non–absolutely separable states with lossy detectors''.
\newblock \href{https://dx.doi.org/10.1103/PhysRevA.104.032427}{Physical Review A {\bf 104}, 032427}~(2021).

\bibitem{mallick_detection_2025}
Bivas Mallick, Saheli Mukherjee, Nirman Ganguly, and A.~S. Majumdar.
\newblock ``Detection of non-absolute separability in quantum states and channels through moments''~(2025).
\newblock arXiv:2508.15700 [quant-ph] version: 1.

\bibitem{jivulescu_positive_2015}
M.A. Jivulescu, N.~Lupa, I.~Nechita, and D.~Reeb.
\newblock ``Positive reduction from spectra''.
\newblock \href{https://dx.doi.org/10.1016/j.laa.2014.11.031}{Linear Algebra and its Applications {\bf 469}, 276--304}~(2015).

\bibitem{song_extreme_2024}
Z.~Song and L.~Chen.
\newblock ``On the extreme points of sets of absolulely separable and {PPT} states''~(2024).
\newblock \href{http://arxiv.org/abs/2409.14347}{arXiv:2409.14347 [quant-ph]}.

\bibitem{arunachalam_is_2015}
S.~Arunachalam, N.~Johnston, and V.~Russo.
\newblock ``Is absolute separability determined by the partial transpose?''.
\newblock \href{https://dx.doi.org/10.26421/QIC15.7-8-10}{Quantum Information and Computation {\bf 15}, 694--720}~(2015).

\bibitem{terhal_family_2001}
B.~M. Terhal.
\newblock ``A family of indecomposable positive linear maps based on entangled quantum states''.
\newblock \href{https://dx.doi.org/10.1016/S0024-3795(00)00251-2}{Linear Algebra and its Applications {\bf 323}, 61--73}~(2001).

\bibitem{lewenstein_characterization_2001}
M.~Lewenstein, B.~Kraus, P.~Horodecki, and J.~I. Cirac.
\newblock ``Characterization of separable states and entanglement witnesses''.
\newblock \href{https://dx.doi.org/10.1103/PhysRevA.63.044304}{Physical Review A {\bf 63}, 044304}~(2001).

\bibitem{bardet_characterization_2020}
I.~Bardet, B.~Collins, and G.~Sapra.
\newblock ``Characterization of {Equivariant} {Maps} and {Application} to {Entanglement} {Detection}''.
\newblock \href{https://dx.doi.org/10.1007/s00023-020-00941-1}{Annales Henri Poincaré {\bf 21}, 3385--3406}~(2020).

\bibitem{bhat_linear_2011}
B.~V.~Rajarama Bhat.
\newblock ``Linear maps respecting unitary conjugation''.
\newblock \href{https://dx.doi.org/10.15352/bjma/1313362996}{Banach Journal of Mathematical Analysis {\bf 5}, 1--5}~(2011).

\bibitem{cerf_reduction_1999}
N.~J. Cerf, C.~Adami, and R.~M. Gingrich.
\newblock ``Reduction criterion for separability''.
\newblock \href{https://dx.doi.org/10.1103/PhysRevA.60.898}{Physical Review A {\bf 60}, 898--909}~(1999).

\bibitem{vidal_robustness_1999}
G.~Vidal and R.~Tarrach.
\newblock ``Robustness of entanglement''.
\newblock \href{https://dx.doi.org/10.1103/PhysRevA.59.141}{Physical Review A {\bf 59}, 141--155}~(1999).

\bibitem{ganguly_witness_2014}
N.~Ganguly, J.~Chatterjee, and A.~S. Majumdar.
\newblock ``Witness of mixed separable states useful for entanglement creation''.
\newblock \href{https://dx.doi.org/10.1103/PhysRevA.89.052304}{Physical Review A {\bf 89}, 052304}~(2014).

\bibitem{roberts_convex_1973}
A.~W. Roberts and D.~E. Varberg.
\newblock ``Convex functions''.
\newblock Academic Press. ~(1973).

\bibitem{skrzypczyk_semidefinite_2023}
P.~Skrzypczyk and D.~Cavalcanti.
\newblock ``Semidefinite programming in quantum information science''.
\newblock \href{https://dx.doi.org/10.1088/978-0-7503-3343-6}{{IOP} series in quantum technology}. IOP Publishing. Bristol, UK~(2023).

\bibitem{boyd_convex_2004}
S.~P. Boyd and L.~Vandenberghe.
\newblock ``Convex optimization''.
\newblock \href{https://dx.doi.org/https://doi.org/10.1017/CBO9780511804441}{Cambridge University Press}. Cambridge, UK ; New York~(2004).

\bibitem{mosek_mosek_2024}
ApS MOSEK.
\newblock ``The {MOSEK} optimization toolbox for python''~(2024).
\newblock \href{https://www.mosek.com/}{url}.

\bibitem{cvxpy_cvxpy_2024}
CVXPY.
\newblock ``{CVXPY} 1.5 documentation''~(2024).
\newblock \href{https://www.cvxpy.org/}{url}.

\bibitem{dur_separability_1999}
W.~Dür, J.~I. Cirac, and R.~Tarrach.
\newblock ``Separability and {Distillability} of {Multiparticle} {Quantum} {Systems}''.
\newblock \href{https://dx.doi.org/10.1103/PhysRevLett.83.3562}{Physical Review Letters {\bf 83}, 3562--3565}~(1999).

\bibitem{neven_entanglement_2018}
A.~Neven, J.~Martin, and T.~Bastin.
\newblock ``Entanglement robustness against particle loss in multiqubit systems''.
\newblock \href{https://dx.doi.org/10.1103/PhysRevA.98.062335}{Physical Review A {\bf 98}, 062335}~(2018).

\bibitem{bennett_unextendible_1999}
C.~H. Bennett, D.~P. DiVincenzo, T.~Mor, P.~W. Shor, J.~A. Smolin, and B.~M. Terhal.
\newblock ``Unextendible {Product} {Bases} and {Bound} {Entanglement}''.
\newblock \href{https://dx.doi.org/10.1103/PhysRevLett.82.5385}{Physical Review Letters {\bf 82}, 5385--5388}~(1999).

\bibitem{augusiak_entangled_2012}
R.~Augusiak, J.~Tura, J.~Samsonowicz, and M.~Lewenstein.
\newblock ``Entangled symmetric states of {N} qubits with all positive partial transpositions''.
\newblock \href{https://dx.doi.org/10.1103/PhysRevA.86.042316}{Physical Review A {\bf 86}, 042316}~(2012).

\bibitem{tura_four-qubit_2012}
J.~Tura, R.~Augusiak, P.~Hyllus, M.~Kuś, J.~Samsonowicz, and M.~Lewenstein.
\newblock ``Four-qubit entangled symmetric states with positive partial transpositions''.
\newblock \href{https://dx.doi.org/10.1103/PhysRevA.85.060302}{Physical Review A {\bf 85}, 060302}~(2012).

\bibitem{gurvits_better_2005}
L.~Gurvits and H.~Barnum.
\newblock ``Better bound on the exponent of the radius of the multipartite separable ball''.
\newblock \href{https://dx.doi.org/10.1103/PhysRevA.72.032322}{Physical Review A {\bf 72}, 032322}~(2005).

\bibitem{dicke_coherence_1954}
R.~H. Dicke.
\newblock ``Coherence in {Spontaneous} {Radiation} {Processes}''.
\newblock \href{https://dx.doi.org/10.1103/PhysRev.93.99}{Physical Review {\bf 93}, 99--110}~(1954).

\bibitem{serrano-ensastiga_maximum_2023}
E.~Serrano-Ensástiga and J.~Martin.
\newblock ``Maximum entanglement of mixed symmetric states under unitary transformations''.
\newblock \href{https://dx.doi.org/10.21468/SciPostPhys.15.3.120}{SciPost Physics {\bf 15}, 120}~(2023).

\bibitem{bohnet-waldraff_absolutely_2017}
F.~Bohnet-Waldraff, O.~Giraud, and D.~Braun.
\newblock ``Absolutely classical spin states''.
\newblock \href{https://dx.doi.org/müller}{Physical Review A {\bf 95}, 012318}~(2017).

\bibitem{louvet_equivalence_2024}
J.~Louvet, E.~Serrano-Ensástiga, T.~Bastin, and J.~Martin.
\newblock ``On the equivalence between {SAPPT} and {SAS} states''~(2024) \href{http://arxiv.org/abs/2411.16461}{arXiv:2411.16461}.

\bibitem{zyczkowski_volume_1998}
K.~Zyczkowski, P.~Horodecki, A.~Sanpera, and M.~Lewenstein.
\newblock ``Volume of the set of separable states''.
\newblock \href{https://dx.doi.org/10.1103/PhysRevA.58.883}{Physical Review A {\bf 58}, 883--892}~(1998).

\bibitem{rana_negative_2013}
S.~Rana.
\newblock ``Negative eigenvalues of partial transposition of arbitrary bipartite states''.
\newblock \href{https://dx.doi.org/10.1103/PhysRevA.87.054301}{Physical Review A {\bf 87}, 054301}~(2013).

\bibitem{romero-palleja_multipartite_2024}
J.~Romero-Pallejà, J.~Ahiable, A.~Romancino, C.~Marconi, and A.~Sanpera.
\newblock ``Multipartite entanglement in the diagonal symmetric subspace''~(2024).
\newblock Version Number: 1.

\bibitem{garg_teleportation_2021}
A.~Garg and S.~Adhikari.
\newblock ``Teleportation {Criteria} {Based} on {Maximum} {Eigenvalue} of the {Shared} d $\otimes$ d {Dimensional} {Mixed} {State}: {Beyond} {Singlet} {Fraction}''.
\newblock \href{https://dx.doi.org/10.1007/s10773-021-04725-z}{International Journal of Theoretical Physics {\bf 60}, 1038--1052}~(2021).

\bibitem{koczor_dominant_2021}
B.~Koczor.
\newblock ``The dominant eigenvector of a noisy quantum state''.
\newblock \href{https://dx.doi.org/10.1088/1367-2630/ac37ae}{New Journal of Physics {\bf 23}, 123047}~(2021).

\bibitem{ss_bhattacharya_absolute_2018}
{S.S. Bhattacharya}, A~Mukherjee, A~Roy, B~Paul, K~Mukherjee, Indranil Chakrabarty, C.~Jebaratnam, and Nirman Ganguly.
\newblock ``Absolute {Non}-{Violation} of a {Three}-{Setting} {Steering} {Inequality} by {Two}-{Qubit} {States}''.
\newblock \href{https://dx.doi.org/10.1007/s11128-017-1734-4}{Quantum Information Processing {\bf 17}, 3}~(2018).

\bibitem{ganguly_bell-chsh_2018}
N.~Ganguly, A.~Mukherjee, A.~Roy, S.S. Bhattacharya, B.~Paul, and K.~Mukherjee.
\newblock ``Bell-{CHSH} violation under global unitary operations: {Necessary} and sufficient conditions''.
\newblock \href{https://dx.doi.org/10.1142/S0219749918500405}{International Journal of Quantum Information {\bf 16}, 1850040}~(2018).

\bibitem{abellanet2025private}
J.~Abellanet-Vidal et~al.
\newblock Private communication~(2025).

\bibitem{fukuda_combinatorial_1994}
K.~Fukuda and V.~Rosta.
\newblock ``Combinatorial face enumeration in convex polytopes''.
\newblock \href{https://dx.doi.org/10.1016/0925-7721(94)90017-5}{Computational Geometry {\bf 4}, 191--198}~(1994).

\bibitem{caron_pypoman_2024}
S.~Caron.
\newblock ``pypoman: {Python} module for polyhedral geometry.''~(2024).
\newblock \href{https://pypi.org/project/pypoman/}{url}.

\bibitem{marshall_inequalities_2011}
A.~W. Marshall, I.~Olkin, and B.~C. Arnold.
\newblock ``Inequalities: theory of majorization and its applications''.
\newblock \href{https://dx.doi.org/https://doi.org/10.1007/978-0-387-68276-1}{Springer series in statistics}. Springer Science+Business Media, {LLC}. ~(2011).
\newblock 2nd ed edition.

\bibitem{day_rearrangement_1972}
P.~W. Day.
\newblock ``Rearrangement inequalities''.
\newblock \href{https://dx.doi.org/10.4153/CJM-1972-093-x}{Canadian Journal of Mathematics {\bf 24}, 930--943}~(1972).

\end{thebibliography}
\newpage
\appendix
\section{Numerical approach to Lemma~\ref{Lemma:GeneralAlphaPM}}
\label{AppendixDerivationSimplex}
In this appendix, we present a numerical method to compute the linear inequalities defining the convex hull of a set of extremal points of the absolute separability (AS) set. This approach is particularly useful when the assumptions underlying Lemma~\ref{Lemma:GeneralAlphaPM} are not met, when the symmetry between the primal and dual simplices is broken, as it might be the case for other sets of extremal points \cite{song_extreme_2024, serrano-ensastiga_maximum_2023}.

Working in the eigenvalue $\{\lambda_{i}\}$ space, normalized by the trace constraint $\sum_{i}\lambda_{i}=1$, allows the AS criteria to be expressed in terms of convex polytopes in $(D-1)$-dimensional space. A basis orthogonal to $(1,...,1)$ provides a convenient coordinate system. The polytope can be represented in its $H$-representation as a set of inequalities $A\lambda\leq b$, where $A$ and $b$ are derived from the vertices of the convex hull. In cases with permutation invariance, $A$ consists of all permutations of a single inequality vector \cite{fukuda_combinatorial_1994}.

The analitical computation of $A,b$ is difficult to extend to high dimensions. Instead, numerical tools such as \textit{pypoman} \cite{caron_pypoman_2024} can be used to compute $A$, $b$. Since in the normalized space $A,b$ might depend on the basis chosen, it is advisable to convert them into expressions of the whole space of eigenvalues. Since the representation might depend on the chosen basis, the inequalities are lifted back to the full $D$-dimensional eigenvalue space to preserve generality.

To identify a minimal set of necessary inequalities for a fixed eigenvalue ordering, we consider removing a candidate inequality $\mathbf{a}^{T}\boldsymbol{\lambda}\leq \beta$  and augmenting the remaining system with the normalization and the specific ordering ($\{\lambda_{k}\leq \lambda_{k+1}\}_{k=0,1,...,D-2}$) constraints. Denoting the new system as $A'\boldsymbol{\lambda}\leq\mathbf{b}'$, we solve the linear program
\begin{equation}
\label{eq:LP}
\text{max}\{ \mathbf{a}^{T}\boldsymbol{\lambda}| A'\boldsymbol{\lambda}\leq\mathbf{b}'\}=\gamma.
\end{equation}

If $\gamma\leq\beta$, the removed inequality is redundant. Repeating this process allows reduction to the minimal necessary set.\\

For the case of Theorem~\ref{Prop:Lewenstein}, the vertices $\mathbf{e}_{i}$ are permutations of $(1-(D-1)/(D+\alpha), 1/(D+\alpha),...,1/(D+\alpha))$. The resulting inequalities in the full eigenvalue space take a form such that $\mathbf{b}$ is a constant vector, and $A$ is a matrix of permuted vectors $\mathbf{c}=(p, \cdots, p, r, -q, \cdots, -q)$. The coefficients $p,q,r$ and their multiplicities depend only on $D,\alpha_{\pm}$, ensuring permutation invariance. The number of inequalties scales as $D\times\binom{D}{\max{m(p), m(q)}}\leq D!$. Moreover, the reduction on the number of inequalities can be done until just one expression is needed a part from the ordering and the normalization expressions. 

\section{Schur convexity}
\label{Appendix:Ansatz}
In this section, we revisit key concepts of \textit{majorization}, as outlined in~\cite{marshall_inequalities_2011}, along with the references within the text. 

To begin with, if a vector $\mathbf{x}$ is majorized by $\mathbf{y}$, that is, $\mathbf{x}\prec\mathbf{y}$ then $\mathbf{x}$ is less spread than $\mathbf{y}$. In the context of the space of eigenvalues of quantum density matrices, we refer as $\mathbf{x}$ being more mixed than $\mathbf{y}$. Thus, majorization theory is closely related with absolute separability and PPT. Formally, 

\begin{definition}
\label{DefinitionMajorization}
Consider two $D$-dimensional vectors $\mathbf{x}$ and $\mathbf{y}$ ordered in a non-decreasing way ($x_0\leq\cdots\leq x_{D-1}$), we say that $\mathbf{x}\prec\mathbf{y}$ if
\begin{equation}
\label{eq:Majorization1}
    \sum_{i=0}^{D-1}x_i = \sum_{i=0}^{D-1}y_i
\end{equation}
\begin{equation}
\label{eq:Majorization2}
    \sum_{i=0}^{k}x_i \geq \sum_{i=0}^{k} y_i, \quad k=0,\cdots, D-2
\end{equation}
\end{definition}
\noindent We also present two lemmas, that will be of great utility.
\begin{lemma}
\label{lemma:simplex}
A given vector $\mathbf{x}$ is majorized by another vector $\mathbf{y}$ if and only if $\mathbf{x}$ is in the convex hull of the $D!$ permutations of $\mathbf{y}$.
\end{lemma}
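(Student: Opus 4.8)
\textbf{Proof proposal for Lemma~\ref{lemma:simplex} (Rado's theorem / Birkhoff-von Neumann route).}

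The plan is to establish both directions of the equivalence by reducing to the classical theory of doubly stochastic matrices. For the "if" direction, suppose $\mathbf{x}$ lies in the convex hull of the $D!$ permutations of $\mathbf{y}$, so that $\mathbf{x} = \sum_{\pi} c_\pi\, P_\pi\, \mathbf{y}$ with weights $c_\pi \geq 0$ summing to $1$ and $P_\pi$ the permutation matrices. Then $S := \sum_\pi c_\pi P_\pi$ is a convex combination of permutation matrices, hence doubly stochastic, and $\mathbf{x} = S\mathbf{y}$. It is a standard fact (easily checked directly from Definition~\ref{DefinitionMajorization}) that $\mathbf{x} = S\mathbf{y}$ with $S$ doubly stochastic implies $\mathbf{x} \prec \mathbf{y}$: the trace-preservation $\sum_i x_i = \sum_i y_i$ follows from column sums of $S$ being $1$, and the partial-sum inequalities follow by writing $\sum_{i=0}^k x_i = \sum_j \big(\sum_{i=0}^k S_{ij}\big) y_j$ and bounding the bracketed coefficients, which lie in $[0,1]$ and sum to $k+1$, against the configuration that puts all the weight on the $k+1$ smallest $y_j$'s (using the non-decreasing ordering). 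I would spell out this last bounding step carefully since it is the only genuine computation.

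For the converse "only if" direction — the harder and more interesting one — I would invoke the Hardy-Littlewood-Pólya theorem: $\mathbf{x} \prec \mathbf{y}$ implies the existence of a doubly stochastic $S$ with $\mathbf{x} = S\mathbf{y}$. Then apply the Birkhoff-von Neumann theorem, which states that every doubly stochastic matrix is a convex combination of permutation matrices, $S = \sum_\pi c_\pi P_\pi$. Substituting gives $\mathbf{x} = \sum_\pi c_\pi (P_\pi \mathbf{y})$, exhibiting $\mathbf{x}$ as a convex combination of permutations of $\mathbf{y}$, as required. Alternatively, one can give a self-contained inductive argument: if $\mathbf{x} \prec \mathbf{y}$ and $\mathbf{x} \neq \mathbf{y}$, one finds indices $j < k$ with $y_j < x_j \leq x_k < y_k$ (or the analogous configuration) and a $T$-transform — a convex combination of the identity and the transposition $(j\,k)$ — that moves $\mathbf{y}$ strictly closer to $\mathbf{x}$ while preserving majorization above $\mathbf{x}$; iterating terminates because each step increases the number of coordinates on which the vectors agree, or one tracks a suitable potential. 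Composing the $T$-transforms yields a product of convex combinations of permutation matrices, which is again such a convex combination.

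The main obstacle is the converse direction: one must either cite Hardy-Littlewood-Pólya and Birkhoff-von Neumann (acceptable here since \cite{marshall_inequalities_2011} is already referenced and contains both), or carry out the $T$-transform induction with enough care that termination is manifest. Since the paper's style elsewhere leans on citing \cite{marshall_inequalities_2011} for majorization facts, I expect the cleanest writeup simply states that this lemma is Rado's theorem (the equivalence of majorization with membership in the permutation polytope), cites the relevant section of \cite{marshall_inequalities_2011}, and includes the short direct verification of the "if" direction for completeness.
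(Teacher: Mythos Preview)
Your proposal is correct and, in fact, more than the paper provides: the paper states Lemma~\ref{lemma:simplex} without proof, treating it as a standard majorization fact imported from~\cite{marshall_inequalities_2011}. Your closing expectation is exactly right---the paper simply cites the result, so your Birkhoff--von~Neumann / Hardy--Littlewood--P\'olya argument (or the $T$-transform induction) would serve as a complete substitute, with the former being the cleanest match in spirit to the paper's reliance on~\cite{marshall_inequalities_2011}.
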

From the previous lemma, one can immediately infer that any point in the space of eigenvalues of a density matrix inside the simplexes given by Eq.~\eqref{Inversereduction} is majorized by any of the vertices of the simplex, which are all different permutations of a single one. The same notion can be applied to any of the extreme points that might be derived for AS and AP for better characterization of both sets.

\begin{lemma}
\label{lemma:Convexity}\cite{marshall_inequalities_2011, day_rearrangement_1972}
If $\mathbf{a}\prec \mathbf{b}$, $\mathbf{u}\prec \mathbf{v}$ and if $\mathbf{b}$ and $\mathbf{v}$ are similarly ordered, then $\mathbf{a}+\mathbf{u}\prec\mathbf{b}+\mathbf{v}$.
\end{lemma}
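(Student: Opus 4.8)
The plan is to work directly from Definition \ref{DefinitionMajorization}, recasting its partial-sum conditions in a variational form that interacts well with vector addition. For a vector $\mathbf{w}$ and an index $k$, I would write $s_k(\mathbf{w})$ for the sum of its $k+1$ smallest entries, using the key identity $s_k(\mathbf{w}) = \min_{|I|=k+1}\sum_{i\in I} w_i$, the minimum ranging over all index sets $I$ of cardinality $k+1$. In this language $\mathbf{x}\prec\mathbf{y}$ is precisely $s_{D-1}(\mathbf{x}) = s_{D-1}(\mathbf{y})$ (equal total sums, the clause \eqref{eq:Majorization1}) together with $s_k(\mathbf{x}) \geq s_k(\mathbf{y})$ for $k=0,\dots,D-2$ (the clause \eqref{eq:Majorization2}). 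The target therefore reduces to checking equality of total sums and the family of inequalities $s_k(\mathbf{a}+\mathbf{u}) \geq s_k(\mathbf{b}+\mathbf{v})$ for every $k$.

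The total sum is immediate: $\sum_i(a_i+u_i) = \sum_i a_i + \sum_i u_i = \sum_i b_i + \sum_i v_i = \sum_i(b_i+v_i)$, using the equal-sum clause of each hypothesis. For the partial sums I would proceed in three moves. First, since majorization depends only on the multiset of entries on each side, I may replace $\mathbf{b}$ and $\mathbf{v}$ by permuted copies without affecting the validity of $\mathbf{a}\prec\mathbf{b}$, of $\mathbf{u}\prec\mathbf{v}$, or of the target $\mathbf{a}+\mathbf{u}\prec\mathbf{b}+\mathbf{v}$; the similarly-ordered hypothesis furnishes a single permutation sorting both $\mathbf{b}$ and $\mathbf{v}$ into non-decreasing order at once, so I assume this henceforth. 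Second, the sum of two non-decreasing vectors is non-decreasing, hence $\mathbf{b}+\mathbf{v}$ is already sorted and $s_k(\mathbf{b}+\mathbf{v}) = \sum_{i=0}^{k}(b_i+v_i) = \sum_{i=0}^{k} b_i + \sum_{i=0}^{k} v_i$. Third, I bound $s_k(\mathbf{a}+\mathbf{u})$ from below: for any $I$ with $|I|=k+1$ one has $\sum_{i\in I}(a_i+u_i) = \sum_{i\in I}a_i + \sum_{i\in I}u_i \geq s_k(\mathbf{a}) + s_k(\mathbf{u})$, since a sum over a fixed subset of size $k+1$ is at least the minimal such sum; as the right-hand side is independent of $I$, minimizing the left over $I$ yields the superadditivity bound $s_k(\mathbf{a}+\mathbf{u}) \geq s_k(\mathbf{a}) + s_k(\mathbf{u})$. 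The individual majorizations then give $s_k(\mathbf{a}) \geq \sum_{i=0}^{k} b_i$ and $s_k(\mathbf{u}) \geq \sum_{i=0}^{k} v_i$, whence $s_k(\mathbf{a}+\mathbf{u}) \geq \sum_{i=0}^{k} b_i + \sum_{i=0}^{k} v_i = s_k(\mathbf{b}+\mathbf{v})$, exactly as required.

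The conceptual crux, and the only place the similarly-ordered hypothesis is used, is the second move: it is what lets me identify $s_k(\mathbf{b}+\mathbf{v})$ with the index-wise partial sum $\sum_{i=0}^{k}(b_i+v_i)$ rather than with a strictly smaller minimum over subsets. Without it, $\mathbf{b}+\mathbf{v}$ can be more spread out than the componentwise reading suggests and the chain collapses; a two-entry example with oppositely ordered $\mathbf{b}$ and $\mathbf{v}$ (and $\mathbf{a}=\mathbf{b}$, $\mathbf{u}=\mathbf{v}$ taken in the same, unsorted alignment) already violates the conclusion, confirming the hypothesis cannot be dropped. I expect the superadditivity step $s_k(\mathbf{a}+\mathbf{u}) \geq s_k(\mathbf{a}) + s_k(\mathbf{u})$ to demand the most care in the write-up, precisely because it must rest on the min-over-subsets characterization of $s_k$ and not on any pre-sorting of $\mathbf{a}$ or $\mathbf{u}$, which in general need not be ordered.
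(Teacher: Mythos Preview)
Your argument is correct. The paper does not actually prove this lemma; it is quoted from Marshall--Olkin--Arnold and Day and used as a black box. Your variational reformulation via $s_k(\mathbf{w})=\min_{|I|=k+1}\sum_{i\in I}w_i$ is the standard clean way to handle it: the superadditivity $s_k(\mathbf{a}+\mathbf{u})\geq s_k(\mathbf{a})+s_k(\mathbf{u})$ follows immediately from the min-over-subsets description, and the similarly-ordered hypothesis is used exactly where it must be, to turn $s_k(\mathbf{b}+\mathbf{v})$ into $s_k(\mathbf{b})+s_k(\mathbf{v})$.

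One small slip, unrelated to the main proof: your proposed counterexample at the end does not work as stated. If you literally take $\mathbf{a}=\mathbf{b}$ and $\mathbf{u}=\mathbf{v}$ as vectors, then $\mathbf{a}+\mathbf{u}=\mathbf{b}+\mathbf{v}$ identically and the conclusion holds trivially. A genuine two-entry counterexample needs $\mathbf{a}$ or $\mathbf{u}$ to differ from $\mathbf{b}$ or $\mathbf{v}$ as vectors while agreeing as multisets; for instance $\mathbf{b}=(0,1)$, $\mathbf{v}=(1,0)$, $\mathbf{a}=(0,1)$, $\mathbf{u}=(0,1)$ gives $\mathbf{a}+\mathbf{u}=(0,2)\not\prec(1,1)=\mathbf{b}+\mathbf{v}$.
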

Since the majorization condition  requires  the fulfillement of the normalization Eq.~\eqref{eq:Majorization1}, the previous lemma implies that any point inside of the convex hull of the two simplexes is majorized by some points on the border of the convex hull (obtained as a linear combination of the vertex of the simplexes).

Next, we introduce the concept of Schur-convex functions as those non-increasing under majorization: 
\begin{definition}
A function $f:\mathbb{R}^{D}\rightarrow\mathbb{R}$ is defined Schur-convex if and only if for all $\mathbf{x},\mathbf{y}\in\mathbb{R}^{D}$ such that $\mathbf{x}$ is majorized by $\mathbf{y}$ one has that $f(\mathbf{x})\leq f(\mathbf{y})$.
\end{definition}
The next lemma provides a necessary and sufficient conditions for a function to be Schur-convex. 
\begin{lemma}
\label{thm:lemmaSchur}
A function $f(\mathbf{x})$ is Schur-convex if it fulfills the following conditions:
\begin{itemize}
    \item $f(\mathbf{x})$ must be symmetric under permutations of the arguments of $\mathbf{x}$
    \item $(x_{i}-x_{j})\cdot\left(\frac{\partial f}{\partial x_{i}}- \frac{\partial f}{\partial x_{j}}\right)\geq 0$ for all $\mathbf{x}\in\mathbb{R}^{D}$.
\end{itemize}
\end{lemma}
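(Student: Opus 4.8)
The plan is to prove the (sufficient) direction of the claim by reducing an arbitrary majorization relation $\mathbf{x}\prec\mathbf{y}$ to a finite chain of elementary two–coordinate averaging steps and then showing that the stated derivative inequality forces $f$ to be non-increasing along each step. The basic tool I would invoke is the Hardy--Littlewood--P\'olya reduction lemma (available in~\cite{marshall_inequalities_2011}): if $\mathbf{x}\prec\mathbf{y}$ and $\mathbf{x}\neq\mathbf{y}$, there is a \emph{$T$-transform} $T$ --- a map that fixes every coordinate except some pair $(i,j)$ and sends $(y_i,y_j)$ to $\bigl(\theta y_i+(1-\theta)y_j,\,(1-\theta)y_i+\theta y_j\bigr)$ with $\theta\in[0,1]$ --- such that $\mathbf{x}\prec T\mathbf{y}\prec\mathbf{y}$ and $T\mathbf{y}$ agrees with $\mathbf{x}$ in strictly more entries than $\mathbf{y}$ does. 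Iterating, $\mathbf{x}$ is reached from $\mathbf{y}$ after at most $D-1$ such transforms, so it is enough to prove $f(T\mathbf{y})\le f(\mathbf{y})$ for a single $T$-transform.

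Writing $\partial_i f:=\partial f/\partial x_i$, I would proceed as follows. By the first hypothesis $f$ is invariant under permutations of its arguments; hence I may relabel coordinates so that $T$ acts on the first two, and I may assume the transfer moves their values toward --- but not past --- their common average, since the reverse of such a move differs from it only by a transposition of the two positions, which $f$ does not register. Thus, after also ordering so that $y_1\le y_2$, there is $t_0\in\bigl[0,(y_2-y_1)/2\bigr]$ with $T\mathbf{y}=(y_1+t_0,\,y_2-t_0,\,y_3,\dots,y_D)$. Consider the path
\begin{equation}
\phi(t)=f\bigl(y_1+t,\,y_2-t,\,y_3,\dots,y_D\bigr),\qquad t\in\Bigl[0,\tfrac{y_2-y_1}{2}\Bigr].
\end{equation}
Then $\phi'(t)=\bigl(\partial_1 f-\partial_2 f\bigr)\bigl(y_1+t,\,y_2-t,\,y_3,\dots,y_D\bigr)$, and since along this segment the first coordinate $y_1+t$ never exceeds the second coordinate $y_2-t$, the second hypothesis (with $i=1$, $j=2$) gives $\bigl((y_1+t)-(y_2-t)\bigr)\bigl(\partial_1 f-\partial_2 f\bigr)\ge0$ with a non-positive first factor; therefore $\phi'(t)\le0$ on the whole interval, $\phi$ is non-increasing, and $f(T\mathbf{y})=\phi(t_0)\le\phi(0)=f(\mathbf{y})$.

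Finally I would chain this inequality along the finitely many $T$-transforms provided by the reduction lemma to conclude $f(\mathbf{x})\le f(\mathbf{y})$ whenever $\mathbf{x}\prec\mathbf{y}$ --- which is precisely Schur-convexity. The step I expect to be the real obstacle is not any computation but the careful use of the reduction lemma together with the relabelling: one must make sure that, after invoking symmetry to place the active pair at positions $1$ and $2$ and to assume a non-overshooting transfer, those two coordinates keep their relative order along the entire segment, so that the pointwise Schur--Ostrowski inequality carries a definite sign. I would also flag that the statement tacitly assumes $f$ to be differentiable on (a neighbourhood of) the relevant segments in $\mathbb{R}^D$, which is what makes the passage from $\phi'\le 0$ to monotonicity of $\phi$ legitimate via the mean value theorem.
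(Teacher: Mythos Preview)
Your argument is correct and is essentially the classical Schur--Ostrowski proof: decompose a majorization relation into a chain of $T$-transforms via Hardy--Littlewood--P\'olya, then use the signed derivative condition to show $f$ is monotone along each elementary averaging step. The handling of symmetry (relabelling the active pair to positions $1,2$ and absorbing any ``overshoot'' into a transposition) and the sign bookkeeping along the path $\phi(t)$ are both sound, and your caveat about tacit differentiability is apt.

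That said, there is nothing to compare against: the paper does not supply its own proof of this lemma. It is stated as a known characterization (the Schur--Ostrowski criterion) and attributed to the standard reference~\cite{marshall_inequalities_2011}, then used as a black box to verify that the convex-hull inequality $f(\boldsymbol{\lambda})\le 0$ is preserved under majorization. So your write-up is not an alternative to the paper's approach but rather a reconstruction of the textbook proof the paper cites. One minor remark: the paper actually announces the lemma as ``necessary and sufficient'' while the statement as written only gives the sufficient direction --- your proof addresses exactly that sufficient direction, which is all that is used downstream.
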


We now verify that any vector contained in the convex hull of the two simplexes fulfills the proposed condition in Lemma~\ref{Lemma:GeneralAlphaPM}. First, given the eigenvalues of a density matrix in non-decreasing order $\{\lambda_{k}\leq\lambda_{k+1}\}_{k=0,\cdots,D-2}$ we define
\begin{equation}
f(\boldsymbol{\lambda}) = 1-\left(1-\frac{\alpha_{+}}{\alpha_{-}} \right)\cdot \sum_{i=0}^{c-1}\lambda_{i}-\left[D+\alpha_{+}+c\cdot\left(\frac{\alpha_{+}}{\alpha_{-}}-1\right)\right]\cdot \lambda_{c},
\end{equation}
from Eq.\eqref{eq:GeneralAlphaPM} such that our condition is expressed as $f(\boldsymbol{\lambda})\leq 0$ (cf. Section\ref{sec:convexgeneral}). Here, the argument of the function is any vector of eigenvalues, which is not necessarily sorted. This function is clearly symmetric under permutations, as it accounts for the ordering of each component of the vector independently of its position.  Furthermore, it can be verified from Lemma~\ref{thm:lemmaSchur} that for natural values of $D \in \mathbb{N}$, $\alpha_{+}>0$ and $\alpha_{-}<0$; the function $f(\boldsymbol{\lambda})$ is Schur-convex. On the other hand, we recall that the vertices of the AS simplexes are given by all the permutations of the vectors  $\boldsymbol{\lambda}_{\pm} = \left(\frac{1}{D+\alpha_{\pm}}, \cdots, \frac{1}{D+\alpha_{\pm}}, 1-\frac{D-1}{D+\alpha_{\pm}}\right)$, in the space of eigenvalues. Due to the convexity of the AS set and Schur-convexity of the function, we check the inequality $f(\boldsymbol{\lambda})\leq 0$ on the extremals $\boldsymbol{\lambda}_{\pm }$. In this regard, we confirm the tight value  $f(\boldsymbol{\lambda}_{\pm}) = 0$. Thus, it follows from Schur-convexity that all the vectors contained in the convex hull of the points generated by the permutations of $\boldsymbol{\lambda}_{\pm}$  fulfill the inequality. Finally, the procedure outlined here can be extended to Eqs.~\eqref{eq:TwoSimplex}, \eqref{eq:ASSymmetric2Qubits}, \eqref{eq:SAS3QUBITS}, \eqref{eq:SAP}, \eqref{eq:Coro2Qudits2} and \eqref{eq:Coro3Qudits2} of the main text.

\end{document}